\newtheorem{teo}{Theorem}[section]
\newtheorem{defn}{Definition}[section]
\newtheorem{example}{Example}[section]
\newtheorem{remark}{Remark}
\def\fh{\mathfrak{h}}
\def\cH{\mathcal{H}}
\title{Nilmanifolds and their associated non local fields}
\author{Juan J. Villarreal}
\begin{document}

\maketitle

\begin{abstract}

For a three dimensional nilmanifold together with a three form on it, we build a module $\cH$ of an affine Kac Moody vertex algebras. Then, we associate logarithmic fields to the module $\cH$ and we study their singularities. We also present a physics motivation behind this construction. 

We study a particular case, we show that when the nilmanifold $N$ is a $k$ degree $S^1$--fibration over the two torus and a choice of $l \in \mathbb{Z} \simeq H^3(N, \mathbb{Z})$ the fields associated to the space $\cH$ have tri-logarithm singularities whenever $kl \neq 0$. 

\end{abstract}


\section{Introduction}

Vertex algebras appeared in the early days of string theory, in string theory vertex algebras can be seen as a mathematical counterpart of chiral symmetry algebras in conformal field theory (CFT). Working the CFT interpretation in physics of sigma models, we can associate to some manifolds a vertex algebra. This interpretation turns out to be very restrictive, in particular, the manifolds must be flat\footnote{If we work with super vertex algebras then the manifolds could be Calabi Yau manifolds, these restrictions are given by the beta equations in the physics literature,  \cite{green1988superstring}.}. We can consider other interpretations to associate vertex algebras to manifolds. In \cite{alekseev2005current} the authors considered a bracket for some \emph{fields} that naturally leads to the Courant bracket or more generally, considering additionally a closed three form, the twisted Courant bracket. In this interpretation the bracket does not contain dynamical information. In this work, we use vertex algebras to describe this bracket, we call this construction of vertex algebras the Hamiltonian formalism\footnote{If we consider super vertex algebras, there is a construction called Chiral de Rham \cite{malikov1999chiral} which associate sheaves of super vertex algebras to any manifold. The Courant bracket also appear in this construction, there is an Hamiltonian interpretation of this construction in \cite{ekstrand2009non} }.



In \cite{aldi2012dilogarithms} the authors studied from this point of view the algebras associated to the three dimensional Heisenberg nilmanifolds and , dually,  to the three dimensional torus with a closed three form. This informations is used to build some 6 dimensional nilmanifolds, $M(1,0)$ and $M(0,1)$ respectively wich will be defined in section \ref{3333}. The motivation behind the choice of these manifolds and three forms is a phenomenon in physics called T-duality, the 6 dimensional nilmanifolds also have a motivation from a physical theory called Double field theory \cite{hull2009double}. 

In this work we consider a more general case, three dimensional Heisenberg nilmanifolds with closed three forms. In this case, we have more general 6 dimensional nilmanifolds $M(k,j)$. We explain the construction of these nilmanifolds in the section \ref{3333}, the construction of these nilmanifolds from a physics point of view was given in \cite{reid2009flux}. Considering the Hamiltonian formalism for these more general manifolds, we generalize the algebras found in \cite{aldi2012dilogarithms} as we explain below. 

Our interest behind this construction is to understand, in the framework of vertex algebras, the algebraic structure associated to some \emph{logarithmic fields} on these nilmanifolds. These logarithmic fields describe infinite dimensional Lie algebras that we express in terms of singularities, in particular these fields are not local.    

Before we explain the algebraic construction associated to these six dimensional nilmanifolds, we set our notation for vertex algebras.
 
    

In vertex algebras theory, some of the first examples we study are Heisenberg vertex algebra and affine Kac Moody algebras. We can define these vertex algebras from finite dimensional Lie algebras $\mathfrak{h}$ endowed with a bi-invariant pairing. The space of states of these vertex algebras is given by the vector space\footnote{We used the notation in \cite{frenkel2004vertex}, where the loop algebra is defined as $L\mathfrak{h}:=\mathfrak{h}\otimes \mathbb{C}((t))$ and the algebra $\hat{\mathfrak{h}}\approx L\mathfrak{h}\oplus K\mathbb{C}$ as the central extension for the bi-invariant pairing.}.
$$V^{1}(\mathfrak{h})=Ind^{\hat{\mathfrak{h}}}_{\mathfrak{h}[[t]]\oplus \mathbb{C}K}\ket{0}=U(\hat{\mathfrak{h}})\otimes_{U(\mathfrak{h}[[t]]\oplus\mathbb{C}K)}\mathbb{C}\ket{0},$$
where $K$ acts as $K=1Id$. The fields are defined as linear maps
\begin{equation}Y(.,z):V^{1}(\mathfrak{h})\rightarrow End(V^{1}(\mathfrak{h}))[[z,z^{-1}]],\quad s.t.\quad Y(a,z)b\in V^{1}(\mathfrak{h})((z)).\label{0.01}\end{equation}

In this work we consider six dimensional Lie algebras $\mathfrak{h}$ with basis $\{\alpha_{i},\beta_{j}\}$ for $i,j\in \{1,2,3\}$ and bi-invariant pairing $(\alpha_{i},\beta_{j})=\delta_{ij}$. The generating fields are defined as
\begin{equation}\alpha_{i}(z)=\sum_{n\in\mathbb{Z}}\alpha^{i}_{n}z^{-n-1},\quad\quad\beta_{i}(z)=\sum_{n\in\mathbb{Z}}\beta^{i}_{n}z^{-n-1},\label{generating}\end{equation}
where we consider the basis ${\alpha^{i}_{n}=\alpha_{i}\otimes t^{n}}$ and ${\beta^{i}_{n}=\beta_{i}\otimes t^{n}}$ for $i=1,2,3$ and $n\in\mathbb{Z}$.

\subsection{Torus case}
Consider a six dimensional torus, in our notation $\mathbb{T}^{6}=M(0,0)$. Now $\mathbb{T}^{6}=\mathfrak{H}/\Lambda$ where $\mathfrak{H}$ is $\mathbb{R}^{6}$ endowed with the abelian group structure of the sum, $\Lambda$ is a discrete subgroup. We consider the abelian Lie algebra $Lie(\mathfrak{H})=\mathfrak{h}$ and we construct out of this the Heisenberg vertex algebra $V^{1}(\mathfrak{h})$, we express the algebraic relations between its generating fields (\ref{generating}) as follows
$$[\alpha_{i}(z),\beta_{j}(w)]=\delta_{ij}\partial _{w}\delta (z-w)\quad\quad i, j \in \{1,2,3\}.$$

Considering the action of the Lie algebra $\mathfrak{h}$ on the function space of the torus $\mathbb{T}^{6}$, we define the space
$$\mathcal{H} = \mathrm{Ind}^{\hat{\mathfrak{h}}}_{\mathfrak{h}[[t]]\oplus \mathbb{C}K}L^{2}(\mathbb{T}^{6}) \simeq U(\hat{\fh}) \otimes_{U(\mathfrak{h}[[t]]\oplus \mathbb{C}K)} L^{2}(\mathbb{T}^{6}).$$
By definition $\mathcal{H}$ is a module for the vertex algebra $V^{1}(\mathfrak{h})$. In some cases $\mathcal{H}$ has the structure of vertex algebra the \emph{lattice vertex algebra}.
We are motivated from the physical interpretation, to consider fields $x_{i}(z)$ and $y_{j}(z)$ associated to coordinates $x_{i}$ and $y_{i}$ on the torus $\mathbb{T}^{6}$, these fields satisfy 
\begin{equation}\partial_{z}x_{i}(z)=\alpha_{i}(z),\quad\quad \partial_{z}y_{i}(z)=\beta_{i}(z).\label{no1}\end{equation}

Then we consider the \emph{logarithmic fields} in $End(\mathcal{H})[[z,z^{-1}]][\log z]$ 
\begin{equation}x_{i}(z)=w_{i}\log z+\sum_{n\in\mathbb{Z}}x^{i}_{n}z^{-n},\quad y_{i}(z)=p_{i}\log z+\sum_{n\in\mathbb{Z}}y^{i}_{n}z^{-n},\label{no2}\end{equation}
subject to the algebraic relation 
\begin{equation}[\partial_{z}x_{i}(z), \partial _{w} y_{j}(w)]=[\alpha_{i}(z),\beta_{j}(w)]=\delta_{ij}\partial _{w}\delta (z-w),\label{2}\end{equation}
therefore
\begin{equation}[x_{i}(z),y_{j}(w)]=\delta_{ij} {\log (z-w)} \label{2correction}\end{equation} 
where the singularity is defined by    
{\begin{align}
\begin{split}
{\log (z-w)}&:=i_{z,w}\log (z-w)-i_{w,z}\log (w-z)=\log z+\log \left( 1-\frac{w}{z}\right)-\log w-\log \left( 1-\frac{z}{w}\right)\\
&{=\log z-\sum_{n>0}\frac{w^{n}z^{-n}}{n}-\log w+\sum_{n>0}\frac{z^{n}w^{-n}}{n}},
\end{split}
\label{log}
\end{align}}the notations $i_{z,w}$ and $i_{w,z}$ denote the expansion for $z>w$ and $w>z$ respectively. This is known as a \emph{logarithmic  singularity}. It is easy to see from (\ref{2}) that the modes $\{w_{i},p_{i},x^{i}_{n},y^{i}_{n}\}_{n\in\mathbb{Z}}$ for $i\in\{1,2,3\}$ satisfy a Lie algebra.

\subsection{Twisted torus case}

In this case, we consider the six dimensional nilmanifold\footnote{The case $M(1,0)$ is similar to the twisted torus case, the reason behind this is the T-duality} $M(0,1)$, this nilmanifold is called the double twisted torus. Now $M(0,1)=\mathfrak{H}/\Lambda$ where $\mathfrak{H}$ is $\mathbb{R}^{6}$ endowed with a two step nilpotent structure, and $\Lambda$ is a discrete subgroup. The nilpotent Lie algebra $Lie(\mathfrak{H})=\mathfrak{h}_{0,1}$ is given by
$$[\beta_{i},\beta_{j}]=\epsilon_{ijk}\alpha_{k},\quad[\alpha_{i},\beta_{j}]=0,\quad[\alpha_{i},\alpha_{j}]=0\quad\quad i, j\in \{1,2,3\}.$$
Where $\epsilon_{ijk}$ is the antisymetric tensor. We associate to this Lie algebra an affine Kac Moody vertex algebra $V^{1}(\mathfrak{h}_{0,1})$, the algebraic relations between its generating fields (\ref{generating}) are given by
\begin{equation}[\beta_{i}(z),\beta_{j}(w)]=\epsilon_{ijk}\delta(z-w)\quad\quad[\alpha_{i}(z),\beta_{j}(w)]=\delta_{ij}\partial _{w}\delta (z-w)\quad\quad i, j \in \{1,2,3\}.\label{no4}\end{equation}

The nilpotent algebra $\mathfrak{h}_{0,1}$ acts on the nilmanifold $M(0,1)$ and analogously we define a module for our vertex algebra 
$$\mathcal{H}_{0,1} = \mathrm{Ind}^{\hat{\mathfrak{h}}}_{\mathfrak{h}_{0,1}[[t]]\oplus \mathbb{C}K}L^{2}(M(0,1)) \simeq U(\hat{\fh}) \otimes_{U(\mathfrak{h}_{0,1}[[t]]\oplus \mathbb{C}K)} L^{2}(M(0,1)).$$ 

We are motivated from the physical interpretation, to consider fields $x_{i}(z)$ and $y_{j}(z)$ associated to coordinates $x_{i}$ and $y_{i}$ on the nilmanifold $M(0,1)$. The nilmanifold $M(0,1)$ comes with a global framing on its cotangent bundle given by $T^{*}M(0,1)\approx \mathfrak{h}_{0,1}\times M(0,1)$, then we express with coordinates $\{x_{i},y_{i}\}$ on $M(0,1)$ a basis $\{\alpha_{i},\beta_{j}\}$ of $T^{*}M(0,1)$.
$$\alpha_{i}=dx_{i},\quad \beta_{i}=dy_{i}-\frac{1}{2}\varepsilon_{ijk}x_{j}dx_{k}.$$
Then, also, these are the relations between their fields
\begin{equation} \partial _{z}x_{i}(z)=\alpha_{i}(z), \quad\partial_{z}y_{i}(z)=\beta_{i}(z)+\frac{1}{2}\varepsilon_{ijk}:x_{j}(z)\partial_{z}x_{k}(z){ :}.\label{no3}\end{equation}
We can find the logarithmic fields $End(\mathcal{H}_{0,1})[[z,z^{-1}]][\log z]$ associated to the coordinates integrating the relations before\footnote{These fields have an interpretation related with the loop space $LM(0,1)$ following the Hamiltonian interpretation},
$$x_{i}(z)=w_{i}\log z+\sum_{n\in\mathbb{Z}} x^{i}_{n}z^{-n},\quad\quad y_{i}(z)=p_{i}\log z+\sum_{n\in\mathbb{Z}} y^{i}_{n}z^{-n}+\frac{\varepsilon_{ijk}}{2}w_{j}x_{k}(z)\log z.
$$
The algebraic relations between these fields are restricted by their relations with the fields $\{\alpha_{i}(z),\beta_{j}(z)\}$ in (\ref{no3}) and (\ref{no4}) in the same way that happens in the torus case (\ref{2}). We emphasize here that the logarithmic singularity (\ref{log}) is not enough to express the algebraic relations for these fields, for example the relation between the fields $y_{i}(z)$ and $y_{j}(w)$ satisfy a relation that has the following form 
\begin{align*}
    &[y_{i}(z),y_{j}(w)]=(...)\log(z-w)+\frac{1}{2}\epsilon_{ijk}w_{k}rl(z,w).
\end{align*}
Here the notation $(...)$ means some expressions that involve fields, we express the complete relations in (\ref{rll}). In this case the Roger's dilogarithm $rl(x):=Li_{2}(x)+\frac{1}{2}\log x\log(1-x)$ appears naturally as a singualrity
\begin{align}
\begin{split}
{ {rl(z,w)}}=&Li_{2}(\frac{z}{w})+\frac{1}{2}\log(\frac{z}{w})\log(1-\frac{z}{w})+Li_{2}(\frac{w}{z})+\frac{1}{2}\log(\frac{w}{z})\log(1-\frac{w}{z})\\
&=Li_{2}(\frac{z}{w})+Li_{2}(\frac{w}{z})+\frac{1}{2}(\log w-\log z)^{2}-\frac{1}{2}(\log z-\log w)\log (z-w).
\end{split}
\label{5}
\end{align}

In this case also the modes $\{w_{i},p_{i},x^{i}_{n},y^{i}_{n}\}_{n\in\mathbb{Z}}$ for $i\in\{1,2,3\}$ form a Lie algebra, we can see this Lie algebra as a particular case of the algebra that we describe in the section \ref{sec:3.1}. 

The twisted torus case was studied in \cite{aldi2012dilogarithms} with the Hamiltonian formalism, and from a perturvative CFT point of view in \cite{blumenhagen2011non}. In these works, the authors also work around the interpretation of this singularity and its identities to explain properties in vertex algebras and CFT.  We show in the appendix the relation between these two formalism working their physical interpretations, on the one hand we have the Hamiltonian point of view and on the other hand we have an CFT perturvative point of view. 

\subsection{Twisted nilmanifolds case}

In this work, we consider the more general case given by the six dimensional nilmanifolds $M(k,j)$. The natural appearance of the Roger's dilogarithm on the previous case motivated us to study the singularities of these logarithmic fields in more general cases. Also in this case we do not have perturvative CFT interpretation therefore we have only the Hamiltonian formalism.  

We consider the six dimensional nilmanifold $M(k,j)$, also known as the twisted nilmanifold case. Now $M(k,j)=\mathfrak{H}/\Lambda$ where $\mathfrak{H}$ is $\mathbb{R}^{6}$ endowed with a three step nilpotent structure, and $\Lambda$ is a discrete subgroup. The nilpotent Lie algebra $Lie(\mathfrak{H})=\mathfrak{h}_{k,j}$ is given by
\begin{align}
\begin{split}  [\beta_{1},\beta_{2}]=j\alpha_{3},\quad [\beta_{3},\beta_{1}]=j\alpha_{2}+k\beta_{2},\quad[\beta_{2},\beta_{3}]=j\alpha_{1},\\
 [\beta_{1},\alpha_{2}]=k\alpha_{3},\quad [\beta_{2},\alpha_{2}]=0,\quad [\alpha_{2},\beta_{3}]=k\alpha_{1}.
\end{split}
\label{0.0}
\end{align}
 We associate to this Lie algebra an affine Kac Moody vertex algebra $V^{1}(\mathfrak{h}_{0,1})$, we express the algebraic relations between its generating fields (\ref{generating}) as follows
 \begin{align}
\begin{split}
   [\beta_{1}(z),\beta_{2}(w)]= j\alpha_{3}(w)\delta(z-w),\quad[\beta_{2}(z),\beta_{3}(w)]= j\alpha_{1}(w)\delta(z-w),\\ 
  [\beta_{1}(z),\alpha_{2}(w)]=k\alpha_{3}(w)\delta(z-w),\quad[\alpha_{2}(z),\beta_{3}(w)]=k\alpha_{1}(w)\delta(z-w),\\
   [\beta_{3}(z),\beta_{1}(w)]= \left(j\alpha_{2}(w)+k\beta_{2}(w)\right)\delta(z-w),\quad  [\alpha_{i}(z),\beta_{i}(w)]=\partial_{w}\delta(z-w).
   \end{split}
\label{1}
\end{align}
The nilpotent algebra $\mathfrak{h}_{k,j}$ acts on the nilmanifold $M(k,j)$ and analogously we define a module for our vertex algebra  
$$\mathcal{H}_{k,j} = \mathrm{Ind}^{\hat{\mathfrak{h}}}_{\mathfrak{h}_{k,j}[[t]]\oplus \mathbb{C}K}L^{2}(M(k,j) \simeq U(\hat{\fh}) \otimes_{U(\mathfrak{h}_{k,j}[[t]]\oplus \mathbb{C}K)}L^{2}(M(k,j)).$$ 

Following the same idea in (\ref{no3}) we express the fields $\{x_{i}(z),y_{j}(z)\}$ in term of the fields  $\{\alpha_{i}(z),\beta_{j}(z)\}$, see (\ref{modos}). We find a new singularity for the fields $\{x_{i}(z),y_{j}(z)\}$ which satisfy the restriction impose by their relations with the algebra \eqref{1} of the fields $\{\alpha_{i}(z),\beta_{j}(z)\}$. For example the relation between the fields $y_{1}(z)$ and $y_{3}(w)$ satisfy a relation that has the following form
$$[y_{1}(z),y_{3}(w)] = (...)\log (z-w) +(...)rl(z,w)-kjw_{3}w_{1}t(z,w),$$
here the notation (...) means some expressions that involve fields. We express the complete relations in the theorem \ref{principal}. In this case the new singularity is given by 
\begin{align}
\begin{split}{ {t(z,w)}}=&-2\left( Li_{3}(\frac{z}{w})-Li_{3}(\frac{w}{z})\right)+(\log z-\log w)\left( Li_{2}(\frac{z}{w})+Li_{2}(\frac{w}{z})\right)\\
&+\frac{1}{6}(\log z-\log w)^{3}-\frac{1}{6}(\log z^{2}-3\log w\log z+\log w^{2})\log(z-w).\end{split}
\label{5.5}
\end{align}
In this case the modes $\{w_{i},p_{i},x^{i}_{n},y^{i}_{n}\}_{n\in\mathbb{Z}}$ for $i\in\{1,2,3\}$ form a non-linear Lie algebra, we give the definition of non-linear Lie algebras in the section \ref{Non linear} and we express the non-linear Lie algebra that we found in theorem \ref{theorem4.2}.

Now we emphasize the main new results in this work. First, we found a new example of logarithmic fields describing non local algebraic relations, theorem \ref{principal}. Second, the algebra satisfied by these logarithmic fields shows a new kind of polylogarithm singularity \eqref{5.5}, in particular the tri-logarithm function $Li_{3}(x)$ appears. This singularity is new compared to \cite{aldi2012dilogarithms} where only up to dilogarithmic functions appeared. Third, the Lie algebra of modes defined by the logarithmic fields is a non linear Lie algebra, theorem \ref{theorem4.2}.

This work is organized as follows. In section \ref{2222}, we give short physics motivation. In section \ref{3333}, we introduce and define the objects that we use. In section \ref{chap3}, we express the fields and the algebra that we have from studying the general case $k\neq 0 $ and $j\neq 0$.\\
\\

\textbf{Acknowledgment}: I thank Reimundo Heluani for his continuous support while this work was done, also I want to thank thank Jethro Van Ekeren, Pedram Hekmathi and Alejandro Cabrera for inspiring conversations. This work was done at IMPA, and was finally completed at Virginia Commonwealth University with the help of Marco Aldi. I would like to thank
these institutions for hospitality and excellent working conditions.

\section{ Physics motivation}\label{2222}

In this section we give a very short motivation to some constructions done in this work. A more complete treatment of the concepts introduced here is in the cited references.

\subsection{The current algebras}
In the CFT formalism on the torus there is a symmetry called T-duality, for a vertex algebraic approach see \cite{kapustin2003vertex}. On the other hand, an analogous study of symmetries as T-duality becomes more complicated in general cases. In particular, we are interested in cases where instead of the torus we have nilmanifolds. There is an interesting relation at the topological level of T-duality between three dimensional Heisenberg nilmanifolds $N(k)$ with $H_{j}$-\emph{flux}, \cite{bouwknegt2004t}. The nilmanifolds  $N(k)$ is a $S^{1}$-bundle   
$$S^{1}\rightarrow N(k)\rightarrow \mathbb{T}^{2},$$
where $k\in\mathbb{Z}\simeq H^{2}(\mathbb{T}^{2},\mathbb{Z})$ is the Chern class. And the H-\emph{flux} is given by a three form $H_{j}$ s.t $[H_{j}]= j\in\mathbb{Z}\simeq H^{3}(N(k),\mathbb{Z})$. For these nilmanifolds we can try to develop a theory similar to the case of the torus.  However, several problems arise. In particular, only on the torus we have a CFT. Therefore, we consider a different approach, the Hamiltonian formalism.

From this point of view, we have a relation between Poisson brackets of certain fields called \emph{currents} and Courant brackets \cite{alekseev2005current}.  We now describe this relationship in vertex algebras for the nilmanifolds.\footnote{The relation between between Courant brackets and T-duality, for example in \cite{cavalcanti2011generalized}, can be associated to the Poisson brackets in \cite{alekseev2005current}, this approach was considered in \cite{hekmati2012t}.} We will restrict ourselves to consider only the global sections of $TN(k)\oplus T^{*}N(k)$ given by
\begin{align*}
  {\beta_{1}}\color{black}=\partial_{x}+\frac{k}{2}z\partial_{y},\,\,\,  {\beta_{2}}\color{black}=\partial_{y},\,\,\,  {\beta_{3}}\color{black}=\partial_{z}-\frac{k}{2}x\partial_{y}, \\ 
  {\alpha_{1}}\color{black}=dx,\,\,\,  {\alpha_{2}}\color{black}=dy+\frac{k}{2}xdz-\frac{k}{2}zdx,\,\,\,  {\alpha_{3}}\color{black}=dz.
\end{align*}
The Courant brackets between these sections are given by the Lie algebra (\ref{0.0}). We denoted this Lie algebra as $\mathfrak{h}_{k,j}$, this Lie algebra comes with the bi-invariant pairing given by $(\alpha_{i},\beta_{j})=\delta_{ij}$. The current algebra in \cite{alekseev2005current} can be described in the vertex algebra formalism as the following Kac Moody vertex algebra 
$$V^{1}(\mathfrak{h}_{k,j})=Ind^{\hat{\mathfrak{h}}_{k,j}}_{\mathfrak{h}_{k,j}[[t]]\oplus \mathbb{C}K}\ket{0}=U(\hat{\mathfrak{h}}_{k,j})\otimes_{U(\mathfrak{h}_{k,j}[[t]]\oplus\mathbb{C}K)}\mathbb{C}\ket{0}$$
where $K$ acts as $K=1Id$.  The generating fields satisfy the algebra given in (\ref{1}).\footnote{Here the variable z is the standard notation in the vertex algebra language but it is not related to the light cone coordinates in physics. The notation in physics for Poisson brackets uses the variable $\sigma$ that we interpreted as  $z=e^{i\sigma}$. } 

\subsection{The double}
See \cite{hull2009double} for a physics introduction to  double field theory.  In the Hamiltonain formalism on the torus $T^{*}\mathbb{T}^{3}$, case $k=j=0$, we have fields $x_{i}(z)$ and $p_{i}(z)$ which are related to the current algebra by $\partial _{z}x_{i}(z)=\alpha_{i}(z)$ and $p_{i}(z)=\beta_{i}(z)$. For these fields, T-duality leads to consider transformations where $p_{i}(z)$ and $\partial_{z}x_{i}(z)$ will change their roles, therefore one ends up considering fields $y_{i}(z)$ such that $\partial_{z}y_{i}(z)=p_{i}(z)$. Now, we could instead of $T^{*}\mathbb{T}^{3}$ consider the \emph{double}  $\mathbb{T}^{6}$, a space where both fields are considered $(x_{i}(z),y_{i}(z))$ at the same time. This space realized the T-duality and other properties naturally.\footnote{The fields $y_{i}(z)$ have also an interpretation in CFT, for example the zero mode of the field $y_{i}(z)$ gives us the operator that jumps between the windings lattice. } In particular, the double for nilmanifolds $N(k)$ with $H_{j}$ flux was studied from this point of view in \cite{reid2009flux}. Mathematically, we can describe these doubles as torus fibrations  

\begin{equation*}
\xymatrix @C-=0.4cm @R-=0.4cm{
\mathbb{T}^{3}\ar[r]&{M(k,j)}\ar[d]\\
S^{1}\ar[r]&{N(k)}\ar[d]\\
&\mathbb{T}^{2}} \quad\quad \xymatrix { \\
\text{for}\quad k\in H^{2}(\mathbb{T}^{2},\mathbb{Z})\simeq\mathbb{Z}\quad \text{and}\quad j\in H^{3}(N(k),\mathbb{Z})\simeq\mathbb{Z}.\\}
\end{equation*} 

 The fields $x_{i}(z)$ and $y_{i}(z)$ on the double $M(k,j)$ are given by logarithmic fields as we explain in the section \ref{3333}. Algebraically, $x_{i}(z)$ and $y_{i}(z)$ are fields in
 $$ End(\mathcal{H}(k,j))[[z,z^{-1}]][\log z].$$
 In this work we are interested in the singularities between these logarithmic fields, these singularities are restricted by the relation between the logarithmic fields and the fields of the affine Kac Moody vertex algebra $V^{1}(\mathfrak{h}_{k,j})$.
 



\section{The double and the logarithmic fields for nilmanifolds}\label{3333}

In this section we introduce some definition and objects; we assume knowdledge of vertex algebras. For a introduction to vertex algebras see \cite{kacvertex}, \cite{frenkel2004vertex} and \cite{lepowsky2004introduction}.

\subsection{The courant bracket and the double for nilmanifolds}{\label{section 2.0}}

For any smooth manifold $N$ the tanget space $TN$ forms a Lie algebra considering the commutator of the vector fields. In \cite{dorfman1987dirac}, it was shown how to extend this algebraic structure to a bilinear bracket on $T N \oplus T^{*}N$ which does not quite satisfy the Lie algebra axioms. The Dorfman bracket and more generally the twisted Dorfman bracket for a choice of {\small $H\in\Gamma(\wedge^{3}T^{*}N)$}, a closed three form,  is defined as  
\begin{equation} \label{eq:courant-1}
[X+\xi, Y+\eta]_{H}=[X,Y]_{Lie}+L_{X}\eta-i_{Y}d\xi+i_{Y}i_{X}H, \quad\quad X+\xi,\, Y+\eta \in \Gamma(TN\oplus T^{*}N).
\end{equation} 
Whenever there exist global orthonormal framings for the symmetric pairing 
\begin{equation}
\braket{X+\xi, Y+\eta}=\eta(X)+\xi(Y),
\label{0}
\end{equation}
in $(TN \oplus T^*N)\otimes \mathbb{C}$ and linearly closed under the bracket \eqref{eq:courant-1}, we obtain a global trivialization $(T N \oplus T^*N) \simeq \fh \times N$ and the bracket \eqref{eq:courant-1} endows $\fh$ with a Lie algebra structure. \\

 We consider compact nilmanifolds $N = G/\Gamma$ expressed as a quotient of the nilpotent Lie group $G$ by a co-compact lattice $\Gamma \subset G$. We choose a three form $H$, called \emph{H-flux} in physics literature,  representing a class in $ H^{3}(N,\mathbb{Z})$. In this situation the tangent bundle of $N$ is naturally trivialized as $\mathfrak{g} \times N$, where $\mathfrak{g} = Lie (G)$. There exist global framings, a basis of $\mathfrak{h}$, with the above properties, and the Lie algebra $\fh$ fits into a short exact sequence
\begin{equation}
0 \rightarrow \mathfrak{g}^{*} \rightarrow \mathfrak{h} \rightarrow \mathfrak{g} \rightarrow 0.
\label{h}
\end{equation}
The class of this extension is parametrized by $H$ viewed as a map $\wedge^{2}\mathfrak{g}\rightarrow \mathfrak{g}^{*}$. The exact sequence \eqref{h} integrates, for $H\in H^{3}(N, \mathbb{Z})$, to Lie groups and co-compact discrete subgroups as 
\begin{equation}
0 \rightarrow G^{*} \rightarrow \mathfrak{H} \rightarrow G \rightarrow 0, \quad\quad  0 \rightarrow \Gamma^{*} \rightarrow \Lambda \rightarrow \Gamma \rightarrow 0,\quad\quad  \Gamma^{*}\subset  G^{*},\quad \Lambda \subset\mathfrak{H},\quad \Gamma\subset G.
\label{h2}
\end{equation}
Then we define the double $M$ as a torus bundle
\begin{equation*}
\xymatrix @C-=0.4cm @R-=0.4cm{
    \mathbb{T}^{dimN}=G^{*}/ \Gamma^{*} \ar[r]& M:=\mathfrak{H}/\Lambda \ar[d] \\
& N=G/\Gamma . }
\end{equation*}

 \subsection{Logarithmic fields}\label{section 2.3}
 
The current algebra in \cite{alekseev2005current} form by global sections of $(T N \oplus T^*N) \simeq \fh \times N$ for a nilmanifold $N=G/\Gamma$ is given by a affine Kac Moody vertex algebra
 \begin{equation} V^{1}(\mathfrak{h})=Ind^{\hat{\mathfrak{h}}}_{\mathfrak{h}[[t]]\oplus \mathbb{C}K}\ket{0}=U(\hat{\mathfrak{h}})\otimes_{U(\mathfrak{h}[[t]]\oplus\mathbb{C}K)}\mathbb{C}\ket{0},\label{eq2.3.-1}\end{equation}
 where $\mathfrak{h}$ is the Lie algebra (\ref{eq:courant-1}) and the bi-invariant pairing is given by (\ref{0}).
 
 We are interested in considering more general fields than the fields of the vertex algebra $V^{1}(\mathfrak{h})$. In order to introduce these fields we must work in a larger space of states than (\ref{eq2.3.-1}). In particular, we consider 
 \begin{equation}\mathcal{H} = \mathrm{Ind}^{\hat{\mathfrak{h}}}_{\mathfrak{h}[[t]]\oplus \mathbb{C}K}L^{2}(M) \simeq U(\hat{\fh}) \otimes_{U(\mathfrak{h}[[t]]\oplus \mathbb{C}K)} L^2(M).
\label{in.1}
\end{equation}
Where we consider the induced action from the Lie group $\mathfrak{H}$ (and therefore its Lie algebra $ \mathfrak{h}$) on $L^{2}(M)$ by right translations, hence it arises the infinite dimensional $\hat{\fh}$-module. We notice that $L^2(M)$ is a completion of the group algebra $\mathbb{C}[\Lambda]$ by Fourier expansion. 

In this work we associate coordinates to logarithmic fields as follows. First, the nilmanifold $M$ comes with a global framing on its cotangent bundle given by $T^{*}M\approx \mathfrak{h}^{*}\times M$. Then we express with coordinates $(x_{i})$ on $M$ a basis $(\alpha_{i})$ of $\mathfrak{h}^{*}$, and we assume that these are the relations satisfied by the fields  
 \begin{equation}\alpha_{i}=f(x_{i})dx_{i}\quad \Rightarrow \quad \alpha_{i}(z)=f(x_{i}(z))\partial _{z}x_{i}(z).\label{fie}
 \end{equation}
 
 The motivation behind this comes from the Hamiltonian formalism, the current algebra for $N$ relates $(u_{i},\alpha_{i})\in TN\oplus T^{*}N$ with fields $(u_{i}(z)p_{i}(z),\alpha_{i}(z)\partial_{z} x_{i}(z))$. The elements that accompany $\partial_{z} x_{i}(z)$ come from sections on the cotangent bundle on N. On the double $M$ this corresponds to $(\ref{fie})$.  
 
 Second, we express the fields $x_{i}(z)$ in modes such that they satisfy (\ref{fie}). Therefore we must consider fields also expanded by $\log(z)$ terms and their modes will be operators on $\mathcal{H}$. Then we will have fields on   
 \begin{equation}End(\mathcal{H})[[z,z^{-1}]][\log z].\label{eq2.3-2}\end{equation}

 Finally, the algebraic relations for $x_{i}(z)$ are not necessarily expressed by singularities given by delta functions and their derivatives, because we are working with logarithmic fields. In general, we do not know in advance what kind of singularities we could have between the $x_{i}(z)$ fields. In this work we study the singularities that we have for a particular class of nilmanifolds. Note that the singularities for the logarithmic fields are restricted by (\ref{fie}) given that this imposes relations with the affine Kac Moody algebra $V^{1}(\mathfrak{h})$. 
 
 \begin{example}[The torus case, $k=j=0$] \label{e1}
 In this case $\mathfrak{h}$ is abelian, then $M=\mathbb{T}^{6}$ and (\ref{fie}) gives us the simple relations
 \begin{equation}\alpha_{i}=dx_{i},\quad \beta_{i}=dy_{i}\quad\Rightarrow\quad \partial_{z} x_{i}(z)=\alpha_{i}(z),\quad\partial_{z} y_{i}=\beta_{i}(z).\label{ee1}\end{equation}

The more general fields which satisfies the relation above can be expressed by the logarithmic fields 
$$x_{i}(z)=w_{i}\log z+\sum_{n\in\mathbb{Z}}x^{i}_{n}z^{-n},\quad\quad  y_{i}(z)=p_{i}\log z+\sum_{n\in\mathbb{Z}}y^{i}_{n}z^{-n}.$$
These fields satisfy the logarithmic singularity
$$[x_{i}(z),y_{j}(w)]=log(z-w),$$
we defined this singularity in (\ref{log}). 
 \end{example}

 \begin{example}[The \emph{double twisted torus} case $k=0$, $j=1$]\label{e2}
 In this case $\mathfrak{h}$ is not abelian but 2 step nilpotent, and M is a $\mathbb{T}^{3}$-fibration over $\mathbb{T}^{3}$, (\ref{fie}) gives us the relations
$$\alpha_{i}=dx_{i},\quad \beta_{i}=dy_{i}-\frac{1}{2}\varepsilon_{ijk}x_{j}dx_{k}
\quad\Rightarrow\quad \partial _{z}x_{i}(z)=\alpha_{i}(z), \quad\partial_{z}y_{i}(z)=\beta_{i}(z)+\frac{1}{2}\varepsilon_{ijk}:x_{j}(z)\partial_{z}x_{k}(z){ :}.$$

The more general fields which satisfy the relation above can be expressed by the logarithmic fields 
$$x_{i}(z)=w_{i}\log z+\sum_{n\in\mathbb{Z}} x^{i}_{n}z^{-n},\quad\quad y_{i}(z)=p_{i}\log z+\sum_{n\in\mathbb{Z}} y^{i}_{n}z^{-n}+\frac{\varepsilon_{ijk}}{2}w_{j}x_{k}(z)\log z.
$$

These fields satisfy the following algebraic relations 
\begin{equation}
\begin{split}
&[x_{i}(z),x_{j}(w)]=0,\quad\quad [x_{i}(z),y_{j}(w)]=\delta_{ij}\log(z-w),\\
&[y_{i}(z),y_{j}(w)]=\varepsilon_{ijk}w_{k}rl(z,w)+\frac{1}{2}\varepsilon_{ijk}(\hat{x}_{k}(z)-\hat{x}_{k}(w))\log (z-w).
\end{split}
\label{rll}
\end{equation}

Where we used the singularity (\ref{5}) and the notation $\hat{x}_{i}(z)$ means the projection onto fields without $\log z$ terms, that is $\hat{.}:End (\mathcal{H})[[z,z^{-1}]][\log z]\rightarrow  End(\mathcal{H})[[z,z^{-1}]]$.
\end{example}

For the general case $k\neq 0$ and $j\neq 0$, we will express the algebraic relation between the modes $\{w_{i}, p_{i}, x^{i}_{n}, y^{i}_{n}\}$ for ${n\in\mathbb{Z}}$ and $i\in\{1,2,3\}$. For the particular examples before the algebra of these modes is a Lie algebra, but for the general case the modes will form a nonlinear Lie algebra. 

\subsection{Non-linear Lie algebras}\label{Non linear}

In this section we follow \cite{de2006finite}. Let $\mathfrak{g}$ a vector space, and let $T(\mathfrak{g})$ denote the tensor algebra over $\mathfrak{g}$. If $\mathfrak{g}$ is endowed with a linear map 

$$[,]:\mathfrak{g}\otimes \mathfrak{g}\rightarrow T(\mathfrak{g}).$$

We extended it to $T(\mathfrak{g})$ by the Leibnitz rule, for $A, B\in T(\mathfrak{g})$
$$[A\otimes B, C]=[A,C]\otimes B+ A\otimes [B,C],\quad\quad [A, B\otimes C]=[A,B]\otimes C+B\otimes [A,C].$$

We define 
$$\mathcal{M}(\mathfrak{g}):=span\{A\otimes (b\otimes c-c\otimes b-[b,c])\otimes D |b,c \in \mathfrak{g}, A, D\in T(\mathfrak{g}) \}.$$
Note that $\mathcal{M}(\mathfrak{g})$ is the two sided ideal of the tensor product $T(\mathfrak{g})$ generate by elements $(b\otimes c-c\otimes b-[b,c])$, where $a,b \in \mathfrak{g}$.

\begin{defn} A non linear Lie algebra $\mathfrak{g}$ is a vector space with a linear map $[,]:\mathfrak{g}\otimes \mathfrak{g}\rightarrow T(\mathfrak{g})$ satisfying the following properties $(a,b,c\in \mathfrak{g})$
\begin{itemize}
    \item skewsymmetry: $[a,b]=-[b,a]$.
    \item $[a,[b,c]]-[b,[a,c]]-[[a,b],c]\in\mathcal{M}(\mathfrak{g})$.
\end{itemize}
\end{defn}
The associative algebra $U(\mathfrak{g})=T(\mathfrak{g})/\mathcal{M}(\mathfrak{g})$ is called the universal enveloping algebra of the non linear Lie algebra $\mathfrak{g}$





\section{Logarithmic fields and their singularities, case $k\neq 0$ and $j\neq 0 $}\label{chap3}

In this section we describe the double $M(k,j)$ and their logarithmic fields. We express the singularities between these fields, and the algebra that emerges from their modes.
  
\subsection{The double $M(k,j)$ and their logarithmic fields $x_{i}(z)$ and $y_{i}(z)$} \label{sec:3.1}

Let $k \in \mathbb{Z}$ and $G_k$ be the $3$--dimensional Heisenberg group. It is the manifold $G_k=\mathbb{R}^3$ with multiplication:
$$(x,y,z)(x',y',z')=\left(x+x',y+y'-\frac{k}{2}xz'+\frac{k}{2}x'z,z+z'\right).$$

Let $\Gamma \subset G_k$ be the subgroup generated by the standard basis of $\mathbb{R}^3$. It is a co-compact lattice. We have the corresponding nilmanifold, usually referred to as the \emph{Heisenberg nilmanifold} $N(k):=G_k/\Gamma$. Notice that for all $k$ the groups $G_k$ are isomorphic, but under these isomorphisms, the corresponding $\Gamma$ are not intertwined.

Let $j \in \mathbb{Z} \simeq H^3(N, \mathbb{Z})$ and consider a three form  $H_{j}=-jdx\wedge dy\wedge dz$ representing this class. In particular, sections of the bundle $TN(k) \oplus T^*N(k)$ are endowed with a bilinear operation (the $H$-twisted Dorfman bracket (\ref{eq:courant-1})). This bundle admits a global framing with their respective brackets given in (\ref{0.0}). Hence we obtain the $6$ dimensional three step nilpotent Lie algebra $\fh_{k,j}$ and the trivialization $TN(k) \oplus T^*N(k) \simeq \fh_{k,j} \times N$ . Notice that $\fh_{k,j}$ fits into a short exact sequence as in (\ref{h}), and this extension is an abelian extension but it is not a central extension if $k \neq 0$. 


Let $\mathfrak{H}_{k,j}$ be the nilpotent Lie group with Lie algebra $\fh_{k,j}$. As a manifold it is $\mathbb{R}^6$, its multiplication table can be found by the BCH formula:
\begin{align}
\begin{split} 
&\left( x_{1},x_{2},x_{3},y_{1},y_{2},y_{3} )( x^{\ast}_{1},x_{2}^{\ast},x_{3}^{\ast},y_{1}^{\ast},y_{2}^{\ast},y_{3}^{\ast}\right) =( x_{1}^{\ast\ast},x_{2}^{\ast\ast},x_{3}^{\ast\ast},y_{1}^{\ast\ast},y_{2}^{\ast\ast},y_{3}^{\ast\ast} )\\
& x_{1}^{\ast\ast}=x_{1}+x_{1}^{\ast},\quad\quad x_{3}^{\ast\ast}=x_{3}+x_{3}^{\ast},\quad\quad x_{2}^{\ast\ast}=x_{2}+x_{2}^{\ast}+\frac{k}{2}(x_{3}x_{1}^{\ast}-x_{1}x_{3}^{\ast}),\\
&y_{1}^{\ast\ast}=y_{1}+y_{1}^{\ast}+\frac{k}{2}(y_{2}x_{3}^{\ast}-y_{2}^{\ast}x_{3})+\frac{j}{2}(x_{2}x_{3}^{\ast}-x_{2}^{\ast}x_{3})+\frac{kj}{6}(x_{3}^{\ast}-x_{3})(x_{3}x_{1}^{\ast}-x_{1}x_{3}^{\ast}),\\
& y_{2}^{\ast\ast}=y_{2}+y_{2}^{\ast}+\frac{j}{2}(x_{3}x_{1}^{\ast}-x_{1}^{\ast}x_{3}),\\
&y_{3}^{\ast\ast}=y_{3}+y_{3}^{\ast}+\frac{k}{2}(x_{1}y_{2}^{\ast}-x_{1}^{\ast}y_{2})+\frac
{j}{2}(x_{1}x_{2}^{\ast}-x_{1}^{\ast}x_{2})+\frac{kj}{6}(x_{1}-x_{1}^{\ast})(x_{3}x_{1}^{\ast}-x_{1}x_{3}^{\ast}).
\end{split}
\label{bch}
\end{align}

The canonical basis of $\mathbb{R}^6$ generates a co-compact lattice $\Lambda_{k,j} \subset \mathfrak{H}_{k,j}$. The quotient 
$M= M(k,j)=\mathfrak{H}_{k,j}/\Lambda_{k,j}$ is a compact $6$-dimensional nilmanifold and it is a $\mathbb{T}^{3}$ bundle on $N(k)$ as we described in the introduction and in the section \ref{section 2.0}.  

\begin{remark} It turns out that $M(k,j)=M(j,k)$, a phenomenon which can be explained by topological T-duality. In \cite{hull2009double} is given a physical interpretation of the double. In \cite{bouwknegt2004t} a similar construction is given but instead of work with $M(k,j)$ they consider ${N(k)\times_{\mathbb{T}^{2}}N(j)}$. Finally, we consider nilmanifolds and its nilpotent Lie algebras, a construction of the double for simple Lie groups and algebras is given in \cite{vsevera2015poisson}.
\end{remark}

$M(k,j)$ is a nilmanifold therefore its cotangent bundle can be trivialized with left invariant forms given by:
\begin{align} 
\begin{split}
\alpha_{1}&=dx_{1},\quad\quad\alpha_{2}=dx_{2}-\frac{1}{2}kx_{3}dx_{1}+\frac{1}{2}kx_{1}dx_{3},\quad\quad \alpha_{3}=dx_{3},\\
\beta_{1}&=dy_{1}-\frac{kj}{3}x_{3}^{2}dx_{1}+\frac{1}{2}jx_{3}dx_{2}-\frac{1}{2}(ky_{2}+jx_{2})dx_{3}+\frac{kj}{3}x_{1}x_{3}dx_{3}+\frac{1}{2}kx_{3}dy_{2},\\
\beta_{2}&=dy_{2}-\frac{1}{2}jx_{3}dx_{1}+\frac{1}{2}jx_{1}dx_{3},\\
\beta_{3}&=dy_{3}+\frac{kj}{3}x_{3}x_{1}dx_{1}+\frac{1}{2}(ky_{2}+jx_{2})dx_{1}-\frac{1}{2}jx_{1}dx_{2}-\frac{kj}{3}x_{1}^{2}dx_{3}-\frac{1}{2}kx_{1}dy_{2}.
\end{split}
 \label{eq:formas-en-m}
\end{align}

And (\ref{fie}) gives us the relations
{\small 
 \begin{align}
 \begin{split}
 &\alpha_{1}(z)=\partial x_{1}(z),\quad\quad \alpha_{2}(z)=\partial x_{2}(z)-\frac{1}{2}kx_{3}(z)\partial x_{1}(z)+\frac{1}{2}kx_{1}(z)\partial x_{3}(z),\quad\quad \alpha_{3}(z)=\partial x_{3}(z),\\
  &\beta_{1}(z)=\partial y_{1}(z)+\frac{1}{2}j(x_{3}(z)\partial x_{2}(z)-x_{2}(z)\partial x_{3}(z))+\frac{1}{2}k(x_{3}(z)\partial y_{2}(z)-y_{2}(z)\partial x_{3}(z))\\
 &\quad\quad\quad\quad\quad\quad\quad\quad\quad\quad\quad\quad\quad\quad\quad\quad\quad\quad\quad\quad\quad\quad\quad\quad\quad-\frac{kj}{3}x_{3}^{2}(z)\partial x_{1}(z)+\frac{kj}{3}x_{3}(z)x_{1}(z)\partial x_{3}(z),\\
 &\beta_{2}(z)=\partial y_{2}(z)-\frac{1}{2}jx_{3}(z)\partial x_{1}(z)+\frac{1}{2}jx_{1}(z)\partial x_{3}(z),\\
  &\beta_{3}(z)=\partial y_{3}(z)+\frac{1}{2}j(x_{2}(z)\partial x_{1}(z)-x_{1}(z)\partial x_{2}(z)+\frac{1}{2}k(y_{2}(z)\partial x_{1}(z)-x_{1}(z)\partial y_{2}(z))\\
  &\quad\quad\quad\quad\quad\quad\quad\quad\quad\quad\quad\quad\quad\quad\quad\quad\quad\quad\quad\quad\quad\quad\quad\quad\quad-\frac{kj}{3}x_{1}^{2}(z)\partial x_{3}(z)+\frac{kj}{3}x_{3}(z)x_{1}(z)\partial x_{1}(z).\\
 \end{split}
 \label{formas}
 \end{align}}

The more general fields which satisfy the relation above can be expressed by the following logarithmic fields 
 {\small
\begin{align}
\begin{split}
&x_{1}(z)=w_{1}\log z+\sum x^{1}_{n}z^{-n},\\
&x_{3}(z)=w_{3}\log z+\sum x^{3}_{n}z^{-n},\\
&x_{2}(z)=w_{2}\log z+\sum x_{n}^{2}z^{-n}+\frac{1}{2}k\log z(w_{3}x_{1}(z)-w_{1}x_{3}(z)),\\
   &y_{2}(z)=p_{2}\log z+\sum y_{n}^{2}z^{-n}+\frac{1}{2}j\log z(w_{3}x_{1}(z)-w_{1}x_{3}(z)),\\
   &y_{1}(z)=p_{1}\log z+\sum y_{n}^{1}z^{-n}+1/2k\log z(p_{2}x_{3}(z)-w_{3}y_{2}(z))+j/2\log z(w_{2}x_{3}(z)-w_{3}x_{2}(z))\\
&\quad\quad\quad\quad+\frac{kj}{6}w_{3}(\log z)^{2}(w_{3}x_{1}(z)-w_{1}x_{3}(z))+\frac{kj}{6}x_{3}(z)\log z(w_{3}x_{1}(z)-w_{1}x_{3}(z)),\\
&y_{3}(z)=p_{3}\log z+\sum y_{n}^{3}z^{-n}+1/2k\log z(w_{1}y_{2}(z)-p_{2}x_{1}(z))+j/2\log z(w_{1}x_{2}(z)-w_{2}x_{1}(z))\\
&\quad\quad\quad\quad-\frac{kj}{6}w_{1}(\log z)^{2}(w_{3}x_{1}(z)-w_{1}x_{3}(z))-\frac{kj}{6}x_{1}(z)\log z(w_{3}x_{1}(z)-w_{1}x_{3}(z)).\\
\end{split}
\label{modos}
\end{align}  
}

\subsection{Algebraic relations between the logarithmic fields $x_{i}(z)$ and $y_{i}(z)$}

For arbitrary $k$ and $j$ the algebraic relations are substantially more complicated than the previous cases, examples \ref{e1} and \ref{e2}. 

\begin{teo}\label{principal}
 The following commutation relations for the fields $x_{i}(z)$ and $y_{i}(z)$ in \eqref{modos} imply the commutation relations in \eqref{1}, of the affine Kac Moody vertex algebra $V^{1}(\mathfrak{h}_{k,j})$, for the fields $\alpha_{i}(z)$ and $\beta_{i}(z)$ in \eqref{formas}.
 \begin{align*}
 &[x_{i}(z),y_{j}(w)]=\delta_{ij}\log (z-w),
 \\
 &[y_{1}(z),y_{2}(w)]=\frac{j}{2}(\hat{x}^{3}(z)-\hat{x}^{3}(w))\log (z-w)+jw_{3}rl(w,z),
 \\
 &[y_{1}(z),x_{2}(w)]=\frac{k}{2}(\hat{x}^{3}(z)-\hat{x}^{3}(w))\log (z-w)+kw_{3}rl(w,z),
 \\
 &[y_{2}(z),y_{3}(w)]=\frac{j}{2}(\hat{x}^{1}(z)-\hat{x}^{1}(w))\log (z-w)+jw_{1}rl(w,z),
 \\
 &[x_{2}(z),y_{3}(w)]=\frac{k}{2}(\hat{x}^{1}(z)-\hat{x}^{1}(w))\log (z-w)+kw_{1}rl(w,z),
\end{align*}
\begin{align*}
[y_{1}(z),y_{1}(w)]&=-\frac{kj}{6}\left(\hat{x}_{3}(z)^{2}+\hat{x}_{3}(w)^{2}-3\hat{x}_{3}(z)\hat{x}_{3}(w) \right) {\log(z-w)}\nonumber\\
 &+\frac{kj}{6}w_{3}\left(\hat{x}_{3}(w)\log w+\hat x_{3}(z)\log z\right) {{\log(z-w)}}\nonumber\\  
&+kjw_{3}\left(\hat{x}_{3}(w)-\hat{x}_{3}(z)\right)rl(z,w)+kjw_{3}w_{3}t(z,w), 
\end{align*}
\begin{align*}
[y_{1}(z),y_{3}(w)]&=\frac{k}{2}\left(\hat{y}_{2}(w)-\hat{y}_{2}(z)\right){{\log(z-w)}}+\frac{j}{2}\left(\hat{x}_{2}(w)-\hat{x}_{2}(z)\right){{\log(z-w)}}\nonumber\\
& +\frac{kj}{6}\left(\hat{x}_{3}(z)\hat{x}_{1}(z)+\hat{x}_{1}(w)\hat{x}_{3}(w)-3\hat{x}_{3}(z)\hat{x}_{1}(w)\right){{log(z-w)}}\nonumber               \\
      & +\frac{kj}{6}\left(w_{3}\hat{x}_{1}(w)\log w-2w_{3} \hat x_{1}(z)\log z+w_{1}\hat{x}_{3}(z)\log z-2w_{1}\hat{x}_{3}(w)\log w\right) {{\log(z-w)}}\nonumber\\  
      &+kj\left(w_{1}\hat{x}_{3}(z)-w_{3}\hat{x}_{1}(w)\right){{rl(z,w)}}-(jw_{2}+kp_{2}){{rl(z,w)}}-kjw_{3}w_{1}{{t(z,w)}},
\end{align*}
\begin{align*} [y_{3}(z),y_{3}(w)]&=-\frac{kj}{6}\left(\hat{x}_{1}(z)^{2}+\hat{x}_{1}(w)^{2}-3\hat{x}_{1}(z)\hat{x}_{1}(w)\right){{\log(z-w)}}\nonumber\\
&+\frac{kj}{6}w_{1}\left(\hat{x}_{1}(w)\log w+\hat{x}_{1}(z)\log z\right){{\log(z-w)}}\nonumber\\
&+kjw_{1}\left(\hat{x}_{1}(w)-\hat{x}_{1}(z)\right){{rl(z,w)}}+kjw_{1}w_{1}{{t(z,w)}},
\end{align*}
where the function $t(z,w)$ is the function defined in (\ref{5.5}).
\label{thm:3.3.1}
\end{teo}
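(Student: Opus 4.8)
The plan is to treat Theorem \ref{principal} as a direct verification. By \eqref{formas} the fields $\alpha_i(z)$ and $\beta_i(z)$ are fixed normally ordered differential polynomials in the logarithmic fields $x_i(z),y_i(z)$, so once the brackets of the $x_i,y_j$ are prescribed (the list in the statement, together with the unlisted vanishing brackets $[x_i(z),x_j(w)]=0$), every bracket occurring in \eqref{1} is determined. I would therefore substitute \eqref{formas} into each $[\alpha_i(z),\beta_j(w)]$ and $[\beta_i(z),\beta_j(w)]$, expand by bilinearity together with the Leibniz (non-commutative Wick) rule for commutators of normally ordered products, insert the prescribed brackets of the $x_i,y_j$, and check that the result collapses onto the right-hand side of \eqref{1}. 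Since $[x_i(z),x_j(w)]=0$ and $[x_i(z),y_j(w)]=\delta_{ij}\log(z-w)$, the only contractions that enter are the logarithmic one and the higher brackets $[y_i,y_j]$ carrying $rl(z,w)$ and $t(z,w)$.

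Before doing this I would isolate the arithmetic of the singularities into a short lemma, the differentiation hierarchy
\[
\partial_z\log(z-w)=\delta(z-w),\qquad \partial_z\, rl(z,w)=\tfrac{1}{2z}\log(z-w),
\]
\[
\partial_z\, t(z,w)=-\tfrac1z\, rl(z,w)+\tfrac{\log z}{6z}\log(z-w)+\tfrac16(\log w)^2\,\delta(z-w),
\]
together with the $w$-derivatives, which follow from the symmetries $\log(z-w)=-\log(w-z)$, $rl(z,w)=rl(w,z)$ and $t(z,w)=-t(w,z)$. These identities follow from $\tfrac{d}{ds}Li_2(s)=-s^{-1}\log(1-s)$ and $\tfrac{d}{ds}Li_3(s)=s^{-1}Li_2(s)$ applied to \eqref{log}, \eqref{5}, \eqref{5.5}, after repeatedly using $\log(1-w/z)=\log(z-w)-\log z$ (and its $i_{w,z}$ counterpart) to fold the polylogarithm derivatives back into the distributions $\log(z-w)$ and $rl(z,w)$. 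Alongside these I would record the distributional identities $f(z)\delta(z-w)=f(w)\delta(z-w)$ and $(\log z-\log w)\delta(z-w)=0$, which annihilate all the $\log$-weighted delta contributions.

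With these in hand the verification runs from the simplest brackets to the most involved. The diagonal brackets $[\alpha_i(z),\beta_i(w)]$ reduce immediately: the leading terms give $\partial_z\partial_w\log(z-w)=\partial_w\delta(z-w)$ and every cross term dies because it pairs an $x$ with an $x$ or with a non-matching $y$. The brackets $[\beta_1,\beta_2]$, $[\beta_2,\beta_3]$, $[\beta_1,\alpha_2]$ and $[\alpha_2,\beta_3]$ each involve a single $[y_i,y_j]$ or $[y_i,x_j]$ contraction; after applying $\partial_z rl\sim\log$ and then $f(z)\delta(z-w)=f(w)\delta(z-w)$ they produce exactly $j\alpha_3(w)\delta(z-w)$, $j\alpha_1(w)\delta(z-w)$, $k\alpha_3(w)\delta(z-w)$ and $k\alpha_1(w)\delta(z-w)$, the field coefficients being recognized through \eqref{formas}. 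The genuinely hard steps are $[\beta_3(z),\beta_1(w)]$ and the internal consistency of $[y_1,y_1]$, $[y_1,y_3]$ and $[y_3,y_3]$: here both fields carry their full normally ordered tails, the trilogarithmic term $t(z,w)$ enters, and one must watch the whole tower $t\to rl\to\log\to\delta$ unwind so that, after the $\tfrac{kj}{6}$-weighted $\log$ terms and the $rl$ terms cancel against the derivatives produced by $\beta=\partial y+\cdots$, only $(j\alpha_2(w)+k\beta_2(w))\delta(z-w)$ survives.

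The main obstacle is precisely this last reduction. It is not conceptually deep, but it is where all three phenomena collide at once: the Wick contractions of the long normally ordered expressions in \eqref{formas}, the delta-function remainders generated by $\partial_z t$ and $\partial_z\partial_w t$, and the repeated use of $(\log z-\log w)\delta(z-w)=0$ and $f(z)\delta(z-w)=f(w)\delta(z-w)$ to discard the $\log$-weighted singular pieces. The bookkeeping has to be arranged so that the $Li_3$ contributions disappear entirely (they cannot appear in \eqref{1}) and so that the surviving $\log(z-w)$ terms either differentiate to $\delta(z-w)$ or cancel in pairs; confirming that the numerical coefficients $\tfrac{kj}{6}$ and $kj$ and all signs come out correctly is where the computation is most error-prone, and where I would concentrate the care.
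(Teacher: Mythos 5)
Your plan is, in outline, the paper's own proof (Appendix \ref{proof}): there too the theorem is verified by a substitute--differentiate--compare computation, evaluating $[\partial_z y_i(z),\partial_w y_j(w)]$ twice --- once by differentiating the stated logarithmic brackets, once by expanding \eqref{formas} against the algebra \eqref{1} --- and matching the two outputs through the formal delta calculus of Theorem \ref{kac}. Your ``differentiation hierarchy'' for $\log$, $rl$, $t$ is exactly the content of the paper's footnotes in Appendix \ref{proof}, and the bracket you single out as hardest ($[\beta_3,\beta_1]$, equivalently $[y_1,y_3]$) is the one the paper works out in detail.

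There is, however, one step that would fail if executed as written: the claim that $f(z)\delta(z-w)=f(w)\delta(z-w)$ together with $(\log z-\log w)\delta(z-w)=0$ ``annihilate all the log-weighted delta contributions.'' They do not, and the paper does not use them this way. First, $(\log z-\log w)\partial_w\delta(z-w)$ is \emph{not} zero: differentiating your own identity (equivalently, extending Theorem \ref{kac} to $a(z)=\log z-\log w$, which vanishes only to first order at $z=w$) forces $(\log z-\log w)\partial_w\delta(z-w)=\tfrac{1}{w}\delta(z-w)$. Discarding such terms loses precisely the zero-mode contributions (the $w_i/w$ and $p_i/w$ pieces of $\alpha_2(w)$, $\beta_2(w)$) that must reappear on the right-hand side of \eqref{1}; this is why the paper's footnote records $\partial_z rl(z,w)=\tfrac{1}{2z}\log(z-w)-\tfrac{1}{2}(\log z-\log w)\delta(z-w)$ and keeps $\partial_z\partial_w rl(z,w)=-\tfrac{1}{2}(\log z-\log w)\partial_w\delta(z-w)$ rather than setting these to zero. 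Second, deltas weighted by a single $\log z$ or by $\log z\log w$ --- produced when you differentiate the summands of $[y_1,y_1]$, $[y_1,y_3]$, $[y_3,y_3]$ carrying explicit $\log z,\log w$ factors, e.g.\ $\hat x_1(z)\log z\,\delta(z-w)$ coming from $\tfrac{kj}{6}\,w_3\hat x_1(z)\log z\,\log(z-w)$ --- are not of difference form and are not killed by your two identities; in the paper's computation they are retained and cancel literally against the $w_i\log$ zero-mode parts of the decompositions $x_i=w_i\log+\hat x_i$ appearing when the Kac--Moody side is expanded, with Theorem \ref{kac} invoked only for the hatted (honest) formal distributions. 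So you must either keep every log-weighted term and verify these cancellations, as the paper does, or first establish the logarithmic extension of Theorem \ref{kac} (your identities together with all their derivative consequences, used to \emph{convert} rather than discard such terms) and apply it consistently; with that repair your verification closes and coincides with the paper's.
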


The proof of this theorem is given in the appendix \ref{proof}. 



The modes, when expressed as operators acting on $\cH(k,j)$, are not linearly closed under commutators. In fact, they form a non-linear Lie algebra, see section \ref{Non linear}. The commutation relations between these modes, while not a linear combination of themselves, is compatible with the Jacobi identity.

\begin{teo}\label{theorem4.2}For each pair of integer numbers $k$ and $j$ exists a non-linear Lie algebra with basis $ \left\{ x^i_n, y^i_n, w_i, p_i \right\}$, $i=1,2,3$, $n \in \mathbb{Z}$; the quadratic commutation relations are given by  
\begin{equation*} 
\begin{gathered}
\begin{aligned}
{[}x^{i}_{n},y^{j}_{m}]&=\delta_{ij}\frac{\delta_{n,-m}}{m}, & \\
{[}y^{1}_{n},y^{2}_{m}]&=\frac{j}{2}x^{3}_{n+m}\left(\frac{1}{n}+\frac{1}{m}\right)+jw_{3}\frac{\delta_{n,-m}}{m^{2}},\quad &
[y^{1}_{n},x^{2}_{m}]&=\frac{k}{2}x^{3}_{n+m}\left(\frac{1}{n}+\frac{1}{m}\right)+kw_{3}\frac{\delta_{n,-m}}{m^{2}}, \\
[y^{2}_{n},y^{3}_{m}]&=\frac{j}{2}x^{1}_{n+m}\left(\frac{1}{n}+\frac{1}{m}\right)+jw_{1}\frac{\delta_{n,-m}}{m^{2}},\quad &
{[}x^{2}_{n},y^{3}_{m}]&=\frac{k}{2}x^{1}_{n+m}\left(\frac{1}{n}+\frac{1}{m}\right)+kw_{1}\frac{\delta_{n,-m}}{m^{2}}, 
\end{aligned}
\end{gathered}
\end{equation*}
{\small\begin{equation*}
\begin{gathered}
\begin{aligned}
&[y^{1}_{n},y^{1}_{m}]=-2kjw_{3}w_{3}\frac{\delta_{n,-m}}{m^{3}}-kj\left( \frac{1}{m^{2}}-\frac{1}{n^{2}}\right)w_{3}x^{3}_{n+m} +\frac{kj}{2}\sum_{l}\frac{x^{3}_{n+l}x^{3}_{m-l}}{l}-\frac{kj}{6}\sum_{l}x^{3}_{l}x^{3}_{m+n-l}\left(\frac{1}{m}-\frac{1}{n}\right), \\
&[y^{1}_{n},y^{3}_{m}]=-\frac{1}{2}\left(ky^{2}_{n+m}+jx^{2}_{n+m}\right)\left(\frac{1}{n}+\frac{1}{m}\right)-\left(kp_{2}+jw_{2}\right)\frac{\delta_{n,-m}}{m^{2}} + \\
&\quad\quad 2kjw_{1}w_{3}\frac{\delta_{n,-m}}{m^{3}}+kj\left(\frac{1}{m^{2}}w_{1}x^{3}_{n+m}-\frac{1}{n^{2}}w_{3}x^{1}_{n+m}\right)-\frac{kj}{2}\sum_{l}\frac{x^{3}_{n+l}x^{1}_{m-l}}{l}+\frac{kj}{6}\sum_{l}x^{1}_{l}x^{3}_{m+n-l}\left(\frac{1}{m}-\frac{1}{n}\right),&&& \\
&[y^{3}_{n},y^{3}_{m}]=-2kjw_{1}w_{1}\frac{\delta_{n,-m}}{m^{3}}-kj\left(\frac{1}{m^{2}}-\frac{1}{n^{2}}\right)w_{1}x^{1}_{n+m}+\frac{kj}{2}\sum_{l}\frac{x^{1}_{n+l}x^{1}_{m-l}}{l}-\frac{kj}{6}\sum_{l}x^{1}_{l}x^{1}_{m+n-l}\left(\frac{1}{m}-\frac{1}{n}\right),
\end{aligned}
\end{gathered}
\end{equation*} }
\begin{equation} 
\begin{aligned}
&[w_{i},y^{j}_{m}]=\delta_{ij}\delta_{0,m},\quad\quad [x^{i}_{n},p_{j}]=-\delta_{ij}\delta_{n,0},\quad\quad [w_{i},p_{j}]= [p_{1},p_{1}]=[p_{3},p_{3}]=0,\\
&[p_{1},y^{2}_{m}]=\frac{j}{2}x^{3}_{m},\quad [y^{1}_{n},p_{2}]=\frac{j}{2}x^{3}_{n},\quad [p_{1},x^{2}_{m}]=\frac{k}{2}x^{3}_{m},\quad [y^{1}_{n},w_{2}]=\frac{k}{2}x^{3}_{n},\\
&[p_{2},y^{3}_{m}]=\frac{j}{2}x^{1}_{m},\quad [y^{2}_{n},p_{3}]=\frac{j}{2}x^{1}_{n},\quad [w_{2},y^{3}_{m}]=\frac{k}{2}x^{3}_{m},\quad [x^{2}_{n},p_{3}]=\frac{k}{2}x^{1}_{n},\\
& [p_{1},w_{2}]=kw_{3},\quad [p_{1},p_{2}]=jw_{3},\quad [w_{2},p_{3}]=kw_{1},\quad [p_{2},p_{3}]=jw_{1},\quad [p_{1},p_{3}]=+(jw_{2}+kp_{2}),\\
&[p_{1},y^{1}_{m}]=-\frac{kj}{6}\sum_{l}x^{3}_{l}x^{3}_{m-l},\quad\quad [y^{1}_{n},p_{1}]=\frac{kj}{6}\sum_{l}x^{3}_{l}x^{3}_{n-l},\\
&[p_{1},y^{3}_{m}]=\frac{kj}{6}\sum_{l}x^{1}_{l}x^{3}_{m-l}+\frac{1}{2}(ky^{2}_{m}+jx^{2}_{m}),\quad\quad[y^{1}_{n},p_{3}]=-\frac{kj}{6}\sum_{l}x^{1}_{l}x^{3}_{n-l}+\frac{1}{2}(ky^{2}_{n}+jx^{2}_{n}),\\
&[p_{3},y^{3}_{m}]=-\frac{kj}{6}\sum_{l}x^{1}_{l}x^{1}_{m-l},\quad\quad [y^{3}_{n},p_{3}]=\frac{kj}{6}\sum_{l}x^{1}_{l}x^{1}_{n-l},\\
\end{aligned}
\label{eq:commut-thm2}
\end{equation} 
where we understand $0$ whenever we have an expression $1/n$ for $n=0$. 
\label{thm:3.3.2}
\end{teo}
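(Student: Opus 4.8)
The plan is to deduce Theorem \ref{thm:3.3.2} from Theorem \ref{principal} by passing from the field commutation relations to their modes, and then to check the two axioms of Section \ref{Non linear} for the resulting bracket. Every field identity of Theorem \ref{principal} is an equality in $\en(\cH(k,j))[[z^{\pm 1},w^{\pm 1}]][\log z,\log w]$, and the generators $\{x^i_n,y^i_n,w_i,p_i\}$ are, by definition \eqref{modos}, the coefficients of $z^{-n}$ (resp. $w^{-m}$) and of $\log z$ (resp. $\log w$) in $x_i(z),y_i(z)$. So the first step is to define $\fg$ as their linear span, to define $[\,,\,]\colon \fg\otimes\fg\to T(\fg)$ by the formulas in \eqref{eq:commut-thm2}, and to extend it to $T(\fg)$ by the Leibniz rule as in Section \ref{Non linear}; the assertion of the theorem is then that this $\fg$ is a non-linear Lie algebra.

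The computational heart is the mode extraction. First I would expand each singularity into its homogeneous pieces: $\log(z-w)$ is given termwise in \eqref{log}, and I would expand $rl(z,w)$ from \eqref{5} and $t(z,w)$ from \eqref{5.5} using $Li_s(u)=\sum_{r>0}u^r/r^s$ together with the prescribed $i_{z,w}/i_{w,z}$ expansions. Substituting \eqref{modos} into each bracket of Theorem \ref{principal} and matching coefficients of $z^{-n}w^{-m}$, of $z^{-n}w^{-m}\log z$, of $\log z\,\log w$, and so on, then produces the scalar terms $\delta_{n,-m}/m$, $\delta_{n,-m}/m^2$, $\delta_{n,-m}/m^3$ and the linear terms $x^3_{n+m}(1/n+1/m)$ of \eqref{eq:commut-thm2}. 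The quadratic terms are delicate: the hatted products $\hat x_i(z)\hat x_j(w)$ in Theorem \ref{principal}, multiplied by the expansion of $\log(z-w)$, yield the convolution sums $\sum_l x_{n+l}x_{m-l}/l$ and $\sum_l x_l x_{n+m-l}(1/m-1/n)$. Here one must carefully separate, inside \eqref{modos}, the $\log z$--part (contributing $w_i$ and $p_i$) from the $\sum x^i_n z^{-n}$--part (the hatted fields), since exactly this split distinguishes the scalar $w,p$--contributions from the quadratic $x$--contributions; the convention that $1/n$ reads as $0$ when $n=0$ must be tracked throughout.

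Once the formulas are in hand the axioms follow cleanly. Skewsymmetry $[a,b]=-[b,a]$ is automatic, since each bracket is an operator commutator; as a consistency check the closed forms are antisymmetric under the simultaneous swap $(z,n)\leftrightarrow(w,m)$, which traces back to the parity of the singularities: from \eqref{log} and \eqref{5.5} both $\log(z-w)$ and $t(z,w)$ are antisymmetric, while $rl(z,w)$ in \eqref{5} is symmetric, and in each relation these are paired with structure constants of the correct parity. For the Jacobi identity I would invoke the realization: by construction the modes $\{x^i_n,y^i_n,w_i,p_i\}$ are honest operators on $\cH(k,j)$, the relations \eqref{eq:commut-thm2} are precisely their operator commutators, and hence the assignment of generators to operators extends to an algebra map $U(\fg)=T(\fg)/\mathcal{M}(\fg)\to \en(\cH(k,j))$. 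Since the Leibniz-extended bracket agrees with the commutator in $U(\fg)$ and $\en(\cH(k,j))$ is associative, the Jacobiator $[a,[b,c]]-[b,[a,c]]-[[a,b],c]$ maps to the vanishing associative Jacobiator of the operators, hence lies in $\mathcal{M}(\fg)$; this is the second axiom.

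I expect the main obstacle to be the bookkeeping in the mode extraction for the tri-logarithm, namely the relations $[y^1_n,y^1_m]$, $[y^1_n,y^3_m]$ and $[y^3_n,y^3_m]$. The function $t(z,w)$ mixes $Li_3$, products of two logarithms, and $\log(z-w)$, so its expansion contributes simultaneously to the cubic-looking scalar term $\delta_{n,-m}/m^3$, to the mixed terms $w_3 x^3_{n+m}/m^2$, and to both convolution sums; disentangling these while respecting the noncommutativity of the modes (the order inside $\hat x_i(z)\hat x_j(w)$) is where sign and indexing errors are most likely. A secondary subtlety is verifying that the right-hand sides of \eqref{eq:commut-thm2} are read as genuine elements of $T(\fg)$ rather than of $U(\fg)$, so that the non-linear structure itself, and not merely its universal enveloping algebra, is well posed.
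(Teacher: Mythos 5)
Your first half---defining $\fg$ as the span of the modes and extracting the relations \eqref{eq:commut-thm2} from Theorem \ref{principal} by expanding $\log(z-w)$, $rl(z,w)$, $t(z,w)$ and matching coefficients---is exactly the paper's route: its proof likewise states that the relations follow from the field definitions \eqref{modos} and Theorem \ref{principal}. The genuine gap is in your treatment of the Jacobi identity. The axiom of Section \ref{Non linear} demands that the Jacobiator $[a,[b,c]]-[b,[a,c]]-[[a,b],c]$, which is an element of the tensor algebra $T(\fg)$, lie in the \emph{specific} ideal $\mathcal{M}(\fg)$ generated by the elements $b\otimes c-c\otimes b-[b,c]$. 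Your argument produces an algebra map $\pi\colon T(\fg)\rightarrow \en(\cH(k,j))$ and observes that the Jacobiator maps to an operator Jacobiator, which vanishes by associativity. That only shows the Jacobiator lies in $\ker\pi$. Certainly $\mathcal{M}(\fg)\subseteq\ker\pi$, but nothing guarantees equality: to conclude membership in $\mathcal{M}(\fg)$ you need the induced map $U(\fg)=T(\fg)/\mathcal{M}(\fg)\rightarrow \en(\cH(k,j))$ to be injective, at least on tensor degree $\leq 3$. Faithfulness of this representation is established nowhere, and in the De Sole--Kac framework you cite it is essentially a PBW-type statement whose proof \emph{requires} the Jacobi axiom in the first place; the argument is therefore circular.

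A second, related problem is that the operator realization you invoke is not available in the strength you need. Theorem \ref{principal} is a consistency statement---the posited commutators for $x_{i}(z)$, $y_{i}(z)$ \emph{imply} the Kac--Moody relations \eqref{1}---not an existence theorem for operators on $\cH(k,j)$ whose commutators are exactly \eqref{eq:commut-thm2}; a priori the module could impose further relations, enlarging $\ker\pi$ beyond $\mathcal{M}(\fg)$. This is precisely why the paper proves the Jacobi identity by direct computation: it evaluates the cyclic sum of triple brackets inside $T(\fg)$, using the Leibniz rule on the quadratic terms (where, e.g., bracketing $y^{1}_{\ell}$ into the convolution $\sum_{l}x^{3}_{n+l}x^{1}_{m-l}/l$ turns one tensor factor into a scalar), and exhibits the cancellation, recording the hardest triple $(y^{1}_{\ell},y^{1}_{n},y^{3}_{m})$. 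To complete your proof you must either carry out this formal verification or supply an actual proof of faithfulness of $U(\fg)\rightarrow \en(\cH(k,j))$; as written, the reduction to associativity of the operator algebra does not establish the axiom.
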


\begin{proof}
The commutation relations above follow from the fields definition \eqref{modos} and the algebraic relations given in the theorem \ref{thm:3.3.1}. The computation to check the Jacobi identity is long but straightforward, here we record the most complicated part:

\begin{align*}
[y^{1}_{\ell},[y^{1}_{n},y^{3}_{m}]]
={}&-\frac{1}{2}\left(kjx^{3}_{\ell+n+m}\left(\frac{1}{\ell}+\frac{1}{m+n}\right)+kjw_{2}\frac{\delta_{\ell,-n-m}}{\ell^{2}}\right)\left(\frac{1}{n}+\frac{1}{m}\right)-kjx^{3}_{\ell}\frac{\delta_{n,-m}}{m^{2}} \\
&-2kjw_{3}\delta_{\ell,0}\frac{\delta_{n,-m}}{m^{3}}+kj\left(-\frac{1}{m^{2}}\delta_{\ell,0}x^{3}_{n+m}-\frac{1}{n^{2}}w_{3}\frac{\delta_{\ell,-n-m}}{n+m}\right) \\
&+\frac{kj}{2}x^{3}_{n+\ell+m}\frac{1}{\left(\ell+m\right)\ell}+\frac{kj}{6}\frac{1}{-\ell}x^{3}_{m+n+\ell}\left(\frac{1}{m}-\frac{1}{n}\right) \\
[y^{3}_{m},[y^{1}_{\ell},y^{1}_{n}]]
={}&4kjw_{3}\delta_{m,0}\frac{\delta_{\ell,-n}}{n^{3}}-kj\left(\frac{1}{n^{2}}\delta_{m,0}x^{3}_{\ell+n}+\frac{1}{\ell^{2}}\delta_{m,0}x^{3}_{\ell+n}\right)-kjw_{3}\left(\frac{1}{n^{2}}+\frac{1}{\ell^{2}}\right)\frac{1}{\ell+n}\delta_{m,-\ell-n} \\
&+\frac{kj}{2}x^{3}_{n+m+\ell}\frac{1}{\left(\ell+m\right)\left(m\right)}-\frac{kj}{2}x^{3}_{\ell+m+n}\frac{1}{\left(n+m\right)\left(m\right)} \\
&-\frac{kj}{6}\left(\frac{1}{n}-\frac{1}{\ell}\right)x^{3}_{n+m+\ell}\frac{1}{-m}-\frac{kj}{6}x^{3}_{\ell+m+n}\frac{1}{-m}\left(\frac{1}{n}-\frac{1}{\ell}\right) \\
[y^{1}_{n},[y^{3}_{m},y^{1}_{\ell}]]
={}&\frac{1}{2}\left(kjx^{3}_{\ell+n+m}\left(\frac{1}{n}+\frac{1}{m+\ell}\right)+kjw_{2}\frac{\delta_{n,-\ell-m}}{n^{2}}\right)\left(\frac{1}{m}+\frac{1}{\ell}\right)+kjx^{3}_{n}\frac{\delta_{\ell,-m}}{m^{2}} \\
&+2kjw_{3}\delta_{n,0}\frac{\delta_{\ell,-m}}{m^{3}}-kj\left(-\frac{1}{m^{2}}\delta_{n,0}x^{3}_{\ell+m}-\frac{1}{\ell^{2}}w_{3}\frac{\delta_{n,-\ell-m}}{\ell+m}\right) \\
&-\frac{kj}{2}x^{3}_{n+\ell+m}\frac{1}{\left(n+m\right)n}-\frac{kj}{6}\frac{1}{-n}x^{3}_{m+n+\ell}\left(\frac{1}{m}-\frac{1}{\ell}\right)
\end{align*}

Therefore we have
$$[y^{1}_{l},[y^{1}_{n},y^{3}_{m}]]+[y^{3}_{m},[y^{1}_{l},y^{1}_{n}]]+[y^{1}_{n},[y^{3}_{m},y^{1}_{l}]]=0.$$
\end{proof}



\appendix

  \section{From the Hamiltonian formalism to CFT}\label{CFTH}
 Until now, we have only considered the Poisson brackets using the vertex algebras formalism \eqref{1}. Now we will also consider the Hamiltonian operator. The Hamiltonian is given by a field $h(z)=\sum_{n\in \mathbb{Z}}h_{n}z^{-n-2}$, more specifically, by its zero mode $h_{0}$. The equations of motion for a field $A(z)=\sum_{n\in\mathbb{Z}}a_{(n)}z^{-n-1}$ are given by
\begin{equation}\frac{d}{d\tau}A(z)= [h_{0}, A(z)]\quad \Rightarrow \quad A(z,\tau) =e^{\tau h_{0}}A(z)e^{-\tau h_{0}}=\sum_{n\in\mathbb{Z}}a_{(n)}(\tau)z^{-n-1}.
\label{555}
\end{equation}

Because of $A(z,0)=A(z)$ and $B(z,0)=B(z)$, we know the algebraic relations in $\tau=0$ of the fields $A(z,\tau)$ and $B(z,\tau)$. On the other hand, for an arbitrary $\tau$ the algebra depends of the equations of motion that in general could be hard to solve. 

Now, it could happen that for some cases the theory satisfies extra properties. For example, the beta equations in a CFT. In this case, the fields are described into two \emph{chiral} parts\footnote{Physically this comes from the quantization of the conformal symmetry, that gives us the two non-vanishing components of the energy momentum tensor.} which depend on $\zeta=e^{\tau+i\sigma}$ and $\bar{\zeta}=e^{\tau-i\sigma}$. This means, we will have fields $C(z)=\sum_{n\in\mathbb{Z}}c_{(n)}z^{-n-1}$ and $D(z)=\sum_{n\in\mathbb{Z}}d_{(n)}z^{-n-1}$ such that 
{\fontsize{10}{11}\begin{equation}
\begin{cases}
 C(\zeta):=C(z,\tau)=\sum_{n\in\mathbb{Z}}c_{(n)}(\tau)z^{-n-1}=\sum_{n\in\mathbb{Z}}c_{(n)}\zeta^{-n-1}\\
 D(\zeta):=D(z,\tau)=\sum_{n\in\mathbb{Z}}d_{(n)}(\tau)z^{-n-1}=\sum_{n\in\mathbb{Z}}d_{(-n)}\bar{\zeta}^{-n-1}\\
\end{cases}
\text{ and }\quad 
 [C(\zeta),D(\bar{\zeta})]=0,
 \label{4}
\end{equation}}
the modes of the fields are diagonal for the Hamiltonian. Note that we know the brackets between these fields because we already know the algebra of their modes.

\subsection{Torus case}

On the $\mathbb{T}^{3}$ torus case, we can consider the Hamiltonian\footnote{The Hamiltonian could be more general considering a flat metric $G_{ij}$, but here we are considering the simplest case $G_{ij}=\delta_{ij}$} as the zero mode of the field $h(z)=\partial_{z}x_{i}(z)\partial_{z}x_{i}(z)+\partial_{z} y_{i}(z)\partial _{z}y_{i}(z)$. In this case, we have a CFT. We are interested in the particular fields\footnote{We used the fields $y_{i}(z)$ on the double torus, this matches with the standard description by the relation $\partial_{z}y_{i}(z)=p_{i}(z)$. } $C(z):=\partial_{z}y_{i}(z)+\partial_{z}x_{i}(z)$ and $D(z):=\partial_{z}y_{i}(z)-\partial_{z}x_{i}(z)$, they satisfy the conditions in (\ref{4}). In particular, we can use these fields to describe the fields $x(z,\tau)=x(\zeta,\bar{\zeta})=x'(\zeta)+x''(\bar{\zeta})$ such that $\partial_{\zeta}x'=C(\zeta)$ and $\partial_{\bar{\zeta}}x''=D(\bar{\zeta})$ using the notation in (\ref{2}), we have 
{\fontsize{10}{11}\begin{equation*}
\begin{cases}
[x'_{i}(\zeta), x'_{j}(\omega)]=\delta_{ij}2{ \log (\zeta-\omega)}\\
[ x'_{i}(\zeta), x''_{j}(\bar{\omega})]=0\\
[ x''_{i}(\bar{\zeta}), x''_{j}(\bar{\omega})]=-\delta_{ij}2{\log (\bar{\zeta}-\bar{\omega})}
\end{cases}
,
\begin{cases}
[\partial_{\zeta} x_{i}(\zeta),\partial_{\omega} x_{i}(\omega)]=\delta_{ij}2\partial_{\omega}\delta(\zeta-\omega)\\
[\partial_{\zeta} x_{i}(\zeta),\partial_{\bar{\omega}} x_{j}(\bar{\omega})]=0\\
[\partial_{\bar{\zeta}} x_{i}(\bar{\zeta}),\partial_{\bar{\omega}} x_{j}(\bar{\omega})]=-\delta_{ij}2\partial_{\bar{\omega}}\delta(\bar{\zeta}-\bar{\omega})
\end{cases}, 
 \begin{cases}
T(\zeta):=\frac{1}{4}\partial_{\zeta}x_{i}(\zeta)\partial_{\zeta}x_{i}(\zeta)\\
T(\bar{\zeta}):=-\frac{1}{4}\partial_{\bar{\zeta}}x_{i}(\bar{\zeta})\partial_{\bar{\zeta}}x_{i}(\bar{\zeta})
\end{cases},
\end{equation*}}
where $T(\zeta)$ and $T(\bar{\zeta})$ are two copies of the Virasoro algebra with central charge dim$\mathbb{T}^{3}=3$.

\subsection{Twisted torus case}
Now for the twisted torus $\mathbb{T}^{3}$ with $H_{j}$ flux, the Hamiltonian is given by the zero mode of the field  $h(z)=\partial_{z}x_{i}(z)\partial_{z}x_{i}(z)+\partial_{z} y_{i}(z)\partial _{z}y_{i}(z)$. In this case, we do not have a CFT. In \cite{blumenhagen2011non} this case was studied as a perturvative CFT up to first order because the beta equations are satisfied only for this order. We are interested in the particular\footnote{We arrive at these fields from the classical theory that we have not considered here.} fields{\fontsize{10}{11}\begin{align*}
C(z):=-\beta_{i}(z)+\alpha_{i}(z)+j\epsilon_{ijk}x_{j}(z)(\beta_{k}(z)-\alpha_{k}(z)), \text {  and}\quad D(z):=\beta_{i}(z)+\alpha_{i}(z)+j\epsilon_{ijk}x_{j}(z)(\beta_{k}(z)+\alpha_{k}(z)).\end{align*} }
Where $\epsilon_{ijk}$ is the Levi Civita tensor. These fields satisfy the conditions in (\ref{4}) up to first order in $j$. Therefore, in this approach $j$ loses its topological meaning. We can use these fields to describe the field $y(z,\tau)=y(\zeta,\bar{\zeta})=y'(\zeta)+y''(\bar{\zeta})+O(j^{2})$ such that $\partial_{\zeta}y'=C(\zeta)$ and $\partial_{\bar{\zeta}}y''=D(\bar{\zeta})$. In particular, using the notation in (\ref{5}), we have  \\
{\fontsize{10}{11}
\begin{equation}
\begin{cases}
 [y'_{i}( \zeta),y'_{j}( \omega)]=-2\delta_{ij} { \log ( \zeta- \omega)}-\frac{j}{2}\epsilon_{ijk}(\hat{y}'_{k}( \zeta)-\hat{y}'_{k}( \omega)){ \log ( \zeta- \omega)}-j\kappa_{k}'{ rl( \zeta, \omega)}+{\color{black}O(j^{2})}\\
 [y'_{i}( \zeta),y''_{j}(\bar{ \omega})]=0+{\color{black}O(j^{2})}\\
 [y''_{i}(\bar{ \zeta}),y''_{j}(\bar{ \omega})]=2\delta_{ij}{ \log (\bar{ \zeta}-\bar{ \omega})}+\frac{j}{2}\epsilon_{ijk}(\hat{y}''_{k}(\bar{ \zeta})-\hat{y}''_{k}(\bar{ \omega})){ \log (\bar{ \zeta}-\bar{ \omega})}-j\kappa_{k}''{ rl(\bar{ \zeta},\bar{ \omega})}+{\color{black}O(j^{2})}
\end{cases}
\label{6}
\end{equation}
\begin{equation}
\begin{split}
\begin{cases}
 [\partial_{ \zeta}y_{i}( \zeta),\partial_{ \omega}y_{j}( \omega)]=-2\delta_{ij}\partial_{ \omega}\delta( \zeta- \omega)-j\epsilon_{ijk}\partial_{ \omega}y_{k}\delta( \zeta- \omega)+{\color{black}O(j^{2})}\\
 [\partial_{ \zeta}y_{i}( \zeta),\partial_{\bar{ \omega}}y_{j}(\bar{ \omega})]=0+{\color{black}O(j^{2})}\\
 [\partial_{\bar{ \zeta}}y_{i}(\bar{ \zeta}),\partial_{\bar{ \omega}}y_{j}(\bar{ \omega})]=2\delta_{ij}\partial_{\bar{ \omega}}\delta(\bar{ \zeta}-\bar{ \omega})-j\epsilon_{ijk}\partial_{\bar{ \omega}}y_{k}\delta(\bar{ \zeta}-\bar{ \omega})+{\color{black}O(j^{2})}
\end{cases}
\end{split}
\begin{split}
 \begin{cases}
T(\ \zeta):=-\frac{1}{4}\partial_{\zeta}y_{i}(\zeta)\partial_{\zeta}y_{i}(\zeta)\\
T(\bar{\zeta}):=\frac{1}{4}\partial_{\bar{\zeta}}y_{i}(\bar{\zeta})\partial_{\bar{\zeta}}y_{i}(\bar{\zeta})
\end{cases}
\end{split}
\label{9}
\end{equation}}

where $T(\zeta)$ and $T(\bar{\zeta})$ are two copies of the Virasoro algebra up to first order in $j$ with central charge dim$\mathbb{T}^{3}=3$.

The relations in (\ref{9}) were found, from a different point of view,  in \cite{blumenhagen2011non}. We have arrived at this expression using the Hamiltonian interpretation of vertex algebras. The equations in (\ref{6}) are implicit in their work, where they used the correlators language. Finally, for the general case $k\neq 0$ and $j\neq 0$ the beta equations are not satisfied at any order therefore a similar procedure is not possible in the general case. 


 \section{Proof of the singularity}\label{proof}
  
In this section we prove that the logarithmic singularities in the theorem \ref{principal} imply the algebraic relations of the affine Kac Moody vertex algebra $V^{1}(\mathfrak{h}_{k,j})$ given in \eqref{1}. First, we prove this result for the particular case, example 3.2, $k=0$ and $j=1$.

\subsection{The \emph{double twisted torus} case $k=0$, $j=1$}

First, we differentiate the identity\footnote{Note that we have the identities $$\partial_{z}log(z-w)=-\partial_{w}log(z-w)=\delta(z,w),\quad\quad \partial_{w}\partial_{z}rl(z,w)=-\frac{1}{2}(\log z-\log w)\partial_{w}\delta(z,w)$$} $[y_{1}(z),y_{3}(w)]$ in the theorem (\ref{principal})
\begin{align*}
    &[\partial_{z}y_{i}(z),\partial_{w}y_{j}(w)]=\partial_{z}\partial_{w} ( \varepsilon_{ijk}w^{k}rl(z,w)+\frac{1}{2}\varepsilon_{ijk}(\hat{x}_{k}(z)-\hat{x}_{k}(w))\log (z-w)) \\
    &=\varepsilon_{ijk}w^{k}(-\frac{1}{2}(logz-logw)\partial_{w}\delta(z,w))+\frac{1}{2}(-\partial_{z}\hat{x}_{3}(z)-\partial_{w}\hat{x}_{3}(w))\delta(z,w)+\frac{1}{2}(\hat{x}_{3}(z)-\hat{x}_{3}(w))\partial_{w}\delta(z,w).
\end{align*}

Then we compute the same expression using the equations (\ref{formas})
\begin{align*}
&[\partial_{z}y_{i}(z),\partial_{w}y_{j}(w)]=[\beta_{1}(z)+\frac{1}{2}x_{2}\partial_{z}x_{3}(z)-\frac{1}{2}x_{3}\partial_{z}x_{2}(z), \beta_{2}(z)+\frac{1}{2}x_{3}\partial_{z}x_{1}(z)-\frac{1}{2}x_{1}\partial_{z}x_{3}(z) ]\\
&=\alpha_{3}(w)\delta(z,w)+\frac{1}{2}x_{3}(w)\partial_{w}\delta(z,w)-\frac{1}{2}\delta(z,w)\partial_{w}x_{3}(w)-\frac{1}{2}\delta(z,w)\partial_{z}x_{3}(z)-\frac{1}{2}x_{3}(z)\partial_{w}\delta(z,w),
\end{align*}
where we used that $[x_{2}(z),\beta_{2}(w)]=[x_{2}(z),\partial_{w}y_{2}(w)]=-\delta(z,w)$ and in the same way $[\beta_{1}(z),x_{1}(w)]=\delta(z,w)$.

The fact that these two results are the same follows from the next theorem

\begin{teo}\label{kac}
Let $a(z)$ be a formal distribution and let N be a non-negative integer. Then one has the following equality of formal distributions in z and w:
$$\partial_{w}^{N}\delta(z-w)a(z)=\partial_{w}^{N}\delta(z-w)\sum_{j=0}^{N}\partial^{j}a(w)(z-w)^{j}$$

\end{teo}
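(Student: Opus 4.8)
The plan is to prove this as a standard identity about formal distributions, independent of the specific vertex-algebra context, since the statement of Theorem~\ref{kac} is purely combinatorial. The key observation driving the whole argument is the fundamental property of the formal delta distribution $\delta(z-w)=\sum_{n\in\mathbb{Z}}z^{-n-1}w^{n}$, namely that multiplication by $(z-w)$ annihilates it, and more precisely that $(z-w)^{j+1}\partial_w^{N}\delta(z-w)=0$ whenever $j\ge N$. First I would record the two elementary facts I intend to use: the annihilation identity $(z-w)\,\delta(z-w)=0$ (hence $(z-w)^{m}\delta(z-w)=0$ for all $m\ge 1$), and the ``derivative-lowering'' relation obtained by applying $\partial_w$ to $(z-w)^{m}\partial_w^{N}\delta(z-w)$, which lets one show inductively that $(z-w)^{m}\partial_w^{N}\delta(z-w)=0$ precisely when $m>N$.

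The main step is then to Taylor-expand $a(z)$ around $w$ inside the product. Writing $a(z)=\sum_{j\ge 0}\partial^{j}a(w)\,(z-w)^{j}/j!$ is not available because $a(z)$ is a general formal distribution with possibly infinitely many modes in both directions; so instead I would argue mode-by-mode. It suffices by linearity to treat $a(z)=z^{m}$ for a single $m\in\mathbb{Z}$. For this monomial I would expand $z^{m}=\bigl((z-w)+w\bigr)^{m}$, but again care is needed for negative $m$, so the cleaner route is to use the general reordering identity $\delta^{(N)}(z-w)\,a(z)=\sum_{j=0}^{N}\binom{N}{j}\,\delta^{(N-j)}(z-w)\,\partial_w^{\,j}a(w)$ — the Leibniz-type rule for the delta distribution — and then match it against the claimed right-hand side $\partial_w^{N}\delta(z-w)\sum_{j=0}^{N}\partial^{j}a(w)(z-w)^{j}$. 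The reconciliation comes from the relation $(z-w)^{j}\partial_w^{N}\delta(z-w)=\tfrac{N!}{(N-j)!}\,\partial_w^{N-j}\delta(z-w)$ for $0\le j\le N$, which is exactly the combinatorial identity that converts the $(z-w)^{j}$ factors on the right into lower derivatives of the delta distribution, after which the binomial coefficients line up on both sides.

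Concretely, the cleanest self-contained proof proceeds by induction on $N$. For $N=0$ the statement is $\delta(z-w)a(z)=\delta(z-w)a(w)$, which is immediate from $(z-w)\delta(z-w)=0$ applied to each mode of $a$. For the inductive step I would apply $\partial_w$ to the identity at level $N$ and reorganize, using $\partial_w\partial_w^{N}\delta(z-w)=\partial_w^{N+1}\delta(z-w)$ together with the annihilation facts above to absorb the stray terms; the coefficients $\partial^{j}a(w)$ regroup to give the level-$N+1$ statement with its truncated sum $\sum_{j=0}^{N+1}$.

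The step I expect to be the main obstacle is the bookkeeping in the inductive reorganization: when differentiating the right-hand side one generates both a term $\partial_w^{N+1}\delta(z-w)\sum_{j}\partial^{j}a(w)(z-w)^{j}$ and cross terms involving $\partial_w\bigl[\partial^{j}a(w)(z-w)^{j}\bigr]$, and one must verify that the unwanted pieces either vanish by the annihilation identity $(z-w)^{j+1}\partial_w^{N+1}\delta(z-w)=0$ for $j\ge N+1$ or telescope correctly into the shifted sum. Making this cancellation transparent — rather than brute-forcing the binomial coefficients — is where the argument needs the most care, and I would lean on the single reduction identity $(z-w)^{j}\partial_w^{N}\delta(z-w)=\tfrac{N!}{(N-j)!}\partial_w^{N-j}\delta(z-w)$ as the workhorse to keep the computation finite and clean.
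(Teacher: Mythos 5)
Your toolkit is the right one --- the annihilation identities, the reduction to monomials, the Leibniz-type rule $\partial_w^{N}\delta(z-w)\,a(z)=\sum_{j=0}^{N}\binom{N}{j}\,\partial_w^{N-j}\delta(z-w)\,\partial_w^{j}a(w)$, and the reduction identity $(z-w)^{j}\partial_w^{N}\delta(z-w)=\tfrac{N!}{(N-j)!}\,\partial_w^{N-j}\delta(z-w)$ are all correct --- but the final matching step, where you assert that ``the binomial coefficients line up,'' fails. Converting the theorem's right-hand side with your own workhorse identity gives
\[
\partial_{w}^{N}\delta(z-w)\sum_{j=0}^{N}\partial^{j}a(w)(z-w)^{j}
=\sum_{j=0}^{N}\frac{N!}{(N-j)!}\,\partial^{j}a(w)\,\partial_{w}^{N-j}\delta(z-w),
\]
whereas your Leibniz rule expands the left-hand side as $\sum_{j=0}^{N}\binom{N}{j}\,\partial^{j}a(w)\,\partial_{w}^{N-j}\delta(z-w)$. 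The coefficients differ by a factor of $j!$, so the two sides of the printed statement are \emph{not} equal once $N\ge 2$. Concretely, for $N=2$ the true value is $a(w)\partial_{w}^{2}\delta+2\,\partial a(w)\,\partial_{w}\delta+\partial^{2}a(w)\,\delta$, while the printed right-hand side equals $a(w)\partial_{w}^{2}\delta+2\,\partial a(w)\,\partial_{w}\delta+2\,\partial^{2}a(w)\,\delta$, using $(z-w)^{2}\partial_{w}^{2}\delta=2\delta$. The same obstruction kills your proposed induction: differentiating the (true) $N=1$ identity produces the term $\partial^{2}a(w)\,\delta(z-w)$, not the $2\,\partial^{2}a(w)\,\delta(z-w)$ that the printed $N=2$ statement requires, so the inductive step cannot close no matter how the bookkeeping is organized.

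What your computation actually establishes is the corrected identity $\partial_{w}^{N}\delta(z-w)\,a(z)=\partial_{w}^{N}\delta(z-w)\sum_{j=0}^{N}\frac{1}{j!}\,\partial^{j}a(w)(z-w)^{j}$, i.e.\ the statement with divided powers $\partial^{(j)}=\partial^{j}/j!$ in place of $\partial^{j}$. That is the form in which the result appears in \cite{kacvertex}, and a citation to that book is the entirety of the paper's own ``proof'' of Theorem \ref{kac}; the factor $1/j!$ was evidently lost in transcription. So the gap is not in your method but in taking the printed statement at face value: a faithful execution of your own plan would have flagged the missing factorials rather than declared a match. (The slip is invisible in the paper's applications, which only ever invoke the theorem with $N\le 1$, where $j!=1$.)
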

See \cite{kacvertex} for a proof of this theorem.

\subsection{General case}

First, we compute the expression differentiating the identity\footnote{Note that we have the identities 
\begin{align*}\partial_{z}rl(z,w)&=\frac{1}{2}\frac{1}{z}log(z-w)-\frac{1}{2}(logz-logw)\delta(z,w)\end{align*}
\begin{align*}
    &\partial_{z}\partial_{w}t(z,w)=-\frac{1}{6}(logz-logw)(\frac{1}{z}\delta(z,w))+\frac{1}{6}\{\frac{3}{wz}log(z-w)\}-\frac{1}{6}(logz^{2}-3logzlogw+logw^2)\partial_{w}\delta(z,w)
\end{align*}

\begin{align*}
    &\partial_{z}\partial_{w}t(z,w)=\partial_{z}\partial_{w}(\sum_{k}(\frac{z}{w})^{k}\frac{1}{k^{3}}-\sum_{k}(\frac{w}{z})^{k}\frac{1}{k^{3}})(2)+\partial_{z}\partial_{w}((logz-logw)(Li_{2}(\frac{z}{w}))+Li_{2}(\frac{w}{z})))\\
    &+\partial_{z}\partial_{w}\{\frac{1}{6}(logz-logw)^{3}\}-\partial_{z}\partial_{w}\{\frac{1}{6}(logz^{2}-3logzlogw+logw^2)log(z-w)\}\\
    &=2(\frac{1}{zw}(log(z-w)-logz+logw))-2(\frac{1}{zw}(log(z-w)-logz+logw))+(logz-logw)(-\frac{1}{z}\delta(z,w)+\frac{1}{zw})\\
       &+\{(logz-logw)(\frac{-1}{zw})\}+\frac{1}{6}\{\frac{3}{wz}log(z-w)-\frac{1}{6}(-3\frac{logz}{w}+2\frac{logw}{w}-2\frac{logz}{z}+3\frac{logw}{z})\delta(z,w)\}\\
    &-\frac{1}{6}(logz^{2}-3logzlogw+logw^2)\partial_{w}\delta(z,w)\\
    &=-\frac{1}{6}(logz-logw)(\frac{1}{z}\delta(z,w))+\frac{1}{6}\{\frac{3}{wz}log(z-w)\}-\frac{1}{6}(logz^{2}-3logzlogw+logw^2)\partial_{w}\delta(z,w)
\end{align*}



} in the theorem \ref{principal}
\begin{align*}
&\partial_{z}\partial_{w}[y_{1}(z),y_{3}(w)]=\\
      &=-\frac{j}{2}w_{2}(-(\log z-\log w)\partial_{w}\delta(z,w))+\frac{j}{2}(\partial_{z}\hat{x}_{2}(z)+\partial_{w}\hat{x}_{2}(w))\delta(z,w)+\frac{j}{2}(\hat{x}_{2}(w)-\hat{x}_{2}(z))\partial_{w}\delta(z,w)\\
      &-\frac{k}{2}p_{2}(-(\log z-\log w)\partial_{w}\delta(z,w))+\frac{k}{2}(\partial_{z}\hat{y}_{2}(z)+\partial_{w}\hat{y}_{2}(w))\delta(z,w)+\frac{k}{2}(\hat{y}_{2}(w)-\hat{y}_{2}(z))\partial_{w}\delta(z,w)\\
      &+\frac{kj}{6}({x}_{3}(z){x}_{1}(z)+{x}_{3}(w){x}_{1}(w))\partial_{w}\delta(z,w)+\frac{kj}{6}(\partial_{w}({x}_{3}(w){x}_{1}(w))-\partial_{z}({x}_{3}(z){x}_{1}(z)))\delta(z,w)\\
      &-\frac{kj}{2}\left(\partial_{z}\hat{x}_{3}(z)\partial_{w}\hat{x}_{1}(w)\log (z-w)+\hat{x}_{3}(z)\partial_{w}\hat{x}_{1}(w)\delta(z,w)-\partial_{z}\hat{x}_{3}(z)\hat{x}_{1}(w)\delta(z,w)+\hat{x}_{3}(z)\hat{x}_{1}(w)\partial_{w}\delta(z,w)\right)\\
      &-\frac{kj}{2}(-w_{3}\partial_{z}(\hat{x}_{1}(z)\log z)+w_{1}\partial_{w}(\hat{x}_{3}(w)\log w))\delta(z,w)-\frac{kj}{2}(w_{3}\hat{x}_{1}(z)\log z+w_{1}\hat{x}_{3}(w)\log w)\partial_{w}\delta(z,w) \\
      &+kj(w_{1}\partial_{z}\hat{x}_{3}(z))(-\frac{1}{2w}\log (z-w)+\frac{1}{2}(\log z-\log w)\delta(z,w))\\
      &-kj(w_{3}\partial_{w}\hat{x}_{1}(w))(\frac{1}{2z}\log (z-w)-\frac{1}{2}(\log z-\log w)\delta(z,w))\\
      &+kj(w_{1}\hat{x}_{3}(z)-w_{3}\hat{x}_{1}(w))(-\frac{1}{2}(\log z-\log w)\partial_{w}\delta(z,w))\\
      &-kjw_{3}w_{1}(-\frac{1}{2}(\log z-\log w)\frac{1}{z}\delta(z,w)+\frac{1}{6}\frac{3}{wz}\log (z-w)+\frac{1}{2}\log z\log w\partial_{w}\delta(z,w))
\end{align*}

Then we compute the same expression using the equations in (\ref{formas})
\begin{align*}
 &[\partial_{z}y_{1}(z),\partial_{w}y_{3}(w)]=\\
 &-(j\alpha_{2}(w)+k\beta_{2}(w))\delta(z,w)+\frac{j}{2}(\partial_{w}x_{2}(w)+\partial_{z}x_{2}(z))\delta(z,w)+\frac{k}{2}(x_{2}(z)-x_{2}(w)\partial_{w}\delta(z,w))\\
 &\frac{j}{2}(\partial_{w}x_{2}(w)\delta(z,w)-x_{2}(w)\partial_{w}\delta(z,w))+\frac{k}{2}(\partial_{w}y_{2}(w)\delta(z,w)-y_{2}(w)\partial_{w}\delta(z,w))\\
 &\frac{kj}{2}((x_{1}(w)\partial_{w}x_{3}(w)-x_{3}(w)\partial_{w}x_{1}(w))\delta(z,w)+x_{1}(w)x_{3}(w)\partial_{w}\delta(z,w))\\
 &\frac{j}{2}(\partial_{z}x_{2}(z)\delta(z,w)+x_{2}(z)\partial_{w}\delta(z,w))+\frac{k}{2}(\partial_{z}y_{2}(z)\delta(z,w)+y_{2}(z)\partial_{w}\delta(z,w))\\
 &\frac{kj}{2}((x_{1}(z)\partial_{z}x_{3}(z)-x_{3}(z)\partial_{z}x_{1}(z))\delta(z,w)+x_{1}(z)x_{3}(z)\partial_{w}\delta(z,w))\\
  &\frac{kj}{3}((2x_{1}(w)\partial x_{3}(w)-x_{3}(w)\partial_{w}x_{1}(w))\delta(z,w)-x_{3}(w)x_{1}(w)\partial_{w}\delta(z,w))\\
  &\frac{kj}{3}((-2x_{3}(z)\partial x_{1}(z)+x_{1}(z)\partial x_{3}(z))\delta(z,w)-x_{3}(z)x_{1}(z)\partial_{w}\delta(z,w))\\
 &\frac{kj}{2}(-\partial x_{3}(z)x_{1}(w)\delta(z,w)-x_{3}(z)x_{1}(w)\partial_{w}\delta(z,w)-\partial x_{3}(z)\partial x_{1}(w)\log (z-w)+x_{3}(z)\partial_{w}x_{1}(w)\delta(z,w)),
  \end{align*}
where we used $[\beta_{1}(z),x_{2}(w)]=\frac{k}{2}x_{3}(w)\delta(z,w)$, $[\beta_{1}(z),y_{2}(w)]=\frac{j}{2}x_{3}(w)\delta(z,w)$ and  $[x_{2}(z),\beta_{3}(w)]=\frac{k}{2}x_{1}(z)\delta(z,w)$ and $[y_{2}(z),\beta_{3}(w)]=\frac{j}{2}x_{1}(z)\delta(z,w)$.

The fact that these two results are the same follows from theorem \ref{kac}.

    \end{document}